\documentclass[11pt,noparskip]{article}
\usepackage[pagebackref,nosubcaption]{style}
\usepackage{thmtools}
\usepackage{thm-restate}
\usepackage{amssymb,amsmath,amsthm,amsfonts}
\usepackage{graphicx}
\usepackage[font=small]{caption}
\usepackage{float}
\usepackage[ruled,vlined,linesnumbered]{algorithm2e}
\usepackage{algpseudocode}
\usetikzlibrary{calc,decorations.pathreplacing,backgrounds}
\newcommand{\B}{\{0,1\}}

\title{Quantum circuit compilation with constant overhead}
\author{Chenyi Zhang\footnote{Google Quantum AI. Authors are in \emph{reverse} alphabetical order.} ${}^{,\!\,}$\footnote{Stanford University}
\qquad
Xinyu Tan${}^{*,}$\footnote{MIT}
\qquad
Robin Kothari${}^{*}$ \\[0.25em]
David Gosset${}^{*,}$\footnote{Institute for Quantum Computing, University of Waterloo and Perimeter Institute for Theoretical Physics}
\qquad
Craig Gidney${}^{*}$
}

\date{\vspace{-2em}}

\begin{document}

\maketitle
\begin{abstract}

Compiling a continuous gate set to a discrete universal gate set generally incurs an overhead. 
A circuit composed of $G$ arbitrary one- and two-qubit gates can be $\epsilon$-approximated by a sequence of gates from $\{H,T,\mathrm{CNOT}\}$ by replacing each original gate by a sequence of $O(\log(G/\epsilon))$ elementary gates. We  show that this overhead can be avoided using adaptivity for polynomial-sized circuits with inverse-polynomial target error, which is optimal.

\end{abstract}

%===============================================================================

\section{Introduction}\label{sec:intro}

It's often convenient to describe a quantum algorithm in the circuit model using the (infinite) gate set of \emph{arbitrary} one- and two-qubit gates. We do this with the understanding that the circuit could later be compiled down to a universal discrete gate set that is dictated by a specific choice of hardware and fault-tolerance scheme. Here we consider the discrete gate set
\begin{equation}
    \calG = \{H,T,\mathrm{CNOT}\},
\end{equation}
which is a natural instruction set for some well-known fault-tolerant architectures. Our goal in this paper is to reduce the cost of this compilation step.  

Compilation generally incurs an overhead because an arbitrary single-qubit gate cannot be represented \emph{exactly} as a product of gates from a finite gate set. The standard approach is to approximate each gate using the Solovay--Kitaev theorem or number-theoretic methods.  For example, Ross and Selinger~\cite{ross2014optimal} showed how to implement any one- or two-qubit gate to within error $\eps$ without ancillas using $O(\log(1/\epsilon))$ gates over $\calG$.
This is optimal~\cite{beverland2020lower}, even with unlimited ancillas, unlimited Clifford gates, and the ability to use an adaptive quantum circuit (i.e., one in which we can perform intermediate measurements and have future quantum gates depend on these measurement outcomes).

Now suppose we're given a circuit with $G$ arbitrary one- and two-qubit gates and our goal is to implement it within error $\epsilon$ over the gate set $\calG$. We need to ensure that sum of the errors across all $G$ gates does not exceed our budget $\epsilon$. We can therefore afford to implement each gate with error $\epsilon/G$, which costs $O(\log(G/\epsilon))$  gates over $\calG$. In total, the compiled circuit has $O(G \log(G/\epsilon))$ gates. Even if the target error is \emph{constant}, this compilation scheme leads to a total of 
\begin{equation}
    O(G \log G) \ \text{gates,}
\end{equation}
corresponding to a multiplicative overhead of $O(\log G)$ over the original circuit size. This overhead is mild, which is why it is often ignored or abstracted away. 

Our main result shows that this overhead is unnecessary when compiling to an adaptive quantum circuit. In fact, we show that with only constant overhead, we can achieve error $\epsilon=1/\poly(G)$ for any given polynomial, which is sufficient for practical applications.

\begin{theorem}
Any quantum circuit composed of $G$ arbitrary one- and two-qubit gates can be implemented to within any given $1/\mathrm{poly}(G)$ error in diamond distance by an adaptive circuit that uses
\begin{equation}
    O(G) \ \text{gates}    
\end{equation}
in the worst case from the gate set $\calG = \{H,T,\mathrm{CNOT}\}$.
\label{thm:main}
\end{theorem}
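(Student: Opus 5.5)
The plan is to reduce everything to single-qubit $Z$-rotations and then amortize the cost of synthesizing their fine-grained parts across the whole circuit. The main difficulty is the amortization step, and the plan below is least settled there.

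\textbf{Reduction to $Z$-rotations.} Each arbitrary two-qubit gate decomposes exactly into $O(1)$ CNOTs and $O(1)$ single-qubit unitaries. Each single-qubit unitary is, up to phase, $R_z(\alpha)HR_z(\beta)HR_z(\gamma)$. So it suffices to implement a sequence of $N=O(G)$ rotations $R_z(\theta_1),\dots,R_z(\theta_N)$, interleaved with $O(G)$ Clifford gates, to total diamond error $\epsilon=G^{-c}$, using $O(N)$ gates from $\calG$ in the worst case. By the triangle inequality for the diamond norm it is enough that the rotations have summed error at most $\epsilon$.

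\textbf{Splitting each angle.} Write each angle as $\theta_i=\theta_i^{(0)}+\delta_i$. Here $\theta_i^{(0)}$ is a multiple of $\pi/2^{k}$ for a constant $k$, so $R_z(\theta_i^{(0)})$ costs $O(1)$ gates over $\calG$ (e.g.\ by Ross--Selinger at constant precision), and $|\delta_i|\le 2^{-k}$. The whole difficulty is then the small residual rotations $R_z(\delta_i)$. Each of these individually still needs $\Theta(\log G)$ gates by the lower bound of~\cite{beverland2020lower}, so the only way to reach $O(1)$ per rotation is to process many residuals jointly.

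\textbf{Amortization via gate teleportation (the main obstacle).} I would implement each $R_z(\delta_i)$ adaptively by teleportation through a resource state $R_z(\delta_i)\ket{+}$, using one CNOT and a measurement. The outcome applies $R_z(\pm\delta_i)$, and a failure is repaired by a correction $R_z(2\delta_i)$, recursively; the expected number of rounds is $O(1)$. The resource states for all $N$ rotations would be produced in a batch from a single shared high-precision structure, namely a phase-gradient register of $b=O(\log G)$ qubits, prepared once at cost $O(b\log G)=O(\log^2 G)=o(G)$. The crux is making the per-rotation cost of loading the $b$ bits of $\delta_i$ constant on average rather than $O(b)$. I expect this to be the hardest step and would try to exploit the geometric decay of angle sizes. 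The level-$j$ residual carries only $O(1)$ new bits of precision beyond level $j-1$, so each refinement round costs $O(1)$ gates. A rotation then requires level $j$ only with probability $2^{-\Omega(j)}$, because the teleportation and correction rounds terminate early once the remaining angle falls below $\epsilon/N$. If this works, the expected cost per rotation is $\sum_j O(1)\,2^{-\Omega(j)}=O(1)$, while the precision reached is $O(\log G)$ bits wherever needed.

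\textbf{From expected to worst-case cost.} The rounds for distinct rotations use independent measurement outcomes, so the total number of rounds is a sum of $N$ independent random variables with geometric tails. A Chernoff-type bound then shows it exceeds $CN$ with probability at most $e^{-\Omega(N)}\le G^{-c}$ for suitable constant $C$. The compiled circuit hard-caps its gate count at $CN$: if the cap is reached, it stops correcting and simply proceeds. Conditioned on not hitting the cap the implementation has error at most $\epsilon/2$, and the capped branch contributes at most its probability $G^{-c}$ to the diamond distance. This gives total error $1/\mathrm{poly}(G)$ with $O(G)$ gates in the worst case, as claimed in Theorem~\ref{thm:main}.
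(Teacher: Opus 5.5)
There is a genuine gap, and it sits in the step you flagged: the amortization mechanism you propose cannot work. When you teleport $R(\delta)$ through $R(\delta)|+\rangle$, the correction is $R(2\delta)$. The recursion therefore moves to \emph{coarser} angles, and the finest bits of $\delta_i$ are used with probability $1$ in the very first round. No angle in this process ever falls below $\epsilon/N$. So there is no sense in which level $j$ of $\delta_i$ is needed only with probability $2^{-\Omega(j)}$. If you split $\delta_i$ into levels, every level down to precision $\epsilon/N$ must be applied deterministically, which costs $\Theta(\log(N/\epsilon))=\Theta(\log G)$ per rotation.

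The other routes you mention fail for the same reason. Kicking $\delta_i$ back from a shared phase-gradient register means adding a $b$-bit constant, which is $\Theta(\log G)$ gates. The resource states $R(2^j\delta_i)|+\rangle$ for generic angles are all distinct, so nothing is shared across rotations. The total is $\Theta(G\log G)$, which is exactly the overhead the theorem is supposed to remove. Splitting off a constant-precision part $\theta_i^{(0)}$ does not help either, since $\delta_i$ still carries $\Theta(\log G)$ bits.

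The missing idea is to make the set of \emph{distinct} fine angles small and closed under doubling, so resource states can be shared and recycled. The paper proceeds as follows.
\begin{enumerate}
\item Round each angle deterministically to a grid $2\pi j/D$ with $D=\Theta(G/\epsilon)=\mathrm{poly}(G)$.
\item Write $D=B^r$ with $r=O(1)$ and $B=O(\sqrt{G})$, both powers of two, and expand $j$ in base $B$. Each rotation becomes $r$ rotations from $\{R(2\pi k/2^s): 0\le k<B,\ 0\le s\le \log_2 D\}$, a set of size $O(\sqrt{G}\log G)$.
\item This set is closed under the doubling corrections, since $R(2\cdot 2\pi k/2^s)=R(2\pi k/2^{s-1})$, and $s=0$ gives the identity.
\item Prepare two copies (bank and catalyst) of each such $R(\theta)|+\rangle$ by Ross--Selinger, for $o(G)$ gates in total.
\item When $R(\theta)$ is requested and the bank copy is present, use state injection: it consumes the bank copy and, with probability $1/2$, recurses on $R(2\theta)$. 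Otherwise use the phase-halving gadget: it uses the catalyst and one call to $R(2\theta)$ to apply $R(\theta)$ and refill the bank.
\end{enumerate}
For each angle the two gadgets alternate, starting with injection, so halving calls never outnumber injection calls. Each injection ends its chain with probability $1/2$, so the total number of gadget calls is $O(rG+\log(1/\epsilon))$ except with probability $\epsilon$. Your Chernoff-plus-cap argument matches this final step and is fine, but it needs this finite, doubling-closed, recyclable resource structure underneath it.
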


We remark that a similar result can also be proved for $T$-count (the number of $T$ gates used) in the model where Cliffords are free.
For example, a circuit with an arbitrary number of Cliffords and $G$ single-qubit $Z$-rotations can be implemented to error $1/\sqrt{G}$ using at most $8G+o(G)$ $T$ gates (see \Cref{remark:tcount2}).

\paragraph{Optimality and special cases.}
It is natural to wonder if we can do even better. Can we achieve error asymptotically smaller than $1/\poly(G)$ using only $O(G)$ gates? We show via a counting argument that this is impossible even for simple depth-1 circuits composed of $n$ single-qubit rotations. In particular, achieving a superpolynomially small error requires strictly superlinear in $n$ gates (\Cref{prop:optimal}).

This rules out better algorithms for compiling quantum circuits that are initially composed of \emph{arbitrary} one- and two-qubit gates. But it leaves the door open to improvements if the circuit is initially expressed over some other special gate set. One such gate set of practical importance is what we call the dyadic (powers of $2$) gate set, which contains $\calG$ and all rotations about $Z$ that are power-of-2 fractions of $\pi$. More precisely, define $R(\theta)=\mathrm{diag}(1,e^{i\theta})$. Then the dyadic gate set is 
\begin{equation}
\mathcal{D} = \calG \cup\{ R(\pi/2^j): j\geq 1\}.
\label{eq:dyadicgates}
\end{equation}
This gate set is sufficient to express algorithms like the quantum Fourier transform and is a natural example of an infinite gate set that appears in quantum algorithms.

What is the best we can hope for? Certainly we can't expect to compile a circuit with $G$ gates over $\mathcal D$ to fewer than $G$ gates over $\calG$. On the other hand, even if $G=1$, the $\Omega(\log(1/\eps))$ lower bound of Beverland et al.~\cite{beverland2020lower} holds even for dyadic gates (\Cref{prop:dyadic}). Thus the best case scenario is to compile a circuit with $G$ dyadic gates into a circuit with $O(G + \log(1/\eps))$ gates from $\calG$. Our second result shows that this is achievable.

\begin{theorem}\label{thm:dyadic}
Any quantum circuit composed of $G$ one- and two-qubit gates from the gate set $\mathcal D$ defined in \cref{eq:dyadicgates} can be implemented to within diamond distance $\epsilon$ by an adaptive circuit that uses
\begin{align}
    O(G+\log(1/\epsilon)) \ \text{gates}
    \label{eq:runtime_main}
\end{align}
in the worst case from the gate set
$\calG$.
\end{theorem}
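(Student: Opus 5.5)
The plan is to reduce every dyadic rotation to $O(1)$ amortized Clifford+$T$ gates plus access to a reusable \emph{phase-gradient} resource, which is paid for once at cost $O(\log(1/\epsilon))$, or rather $O(\log(G/\epsilon))$ before the fix in the last paragraph. The steps below are ordered so that the cost-critical one comes last.

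\textbf{Step 1 (truncation).} Set $b=\lceil\log_2(G/\epsilon)\rceil+O(1)$. Every gate $R(\pi/2^j)$ with $j>b$ is within operator distance $O(2^{-j})$ of the identity. Deleting all such gates therefore costs total error at most $G\cdot O(2^{-b})\le\epsilon/3$. Gates with $j\le 2$ are already Clifford+$T$, so from now on only $3\le j\le b$ matters.

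\textbf{Step 2 (implementing a rotation catalytically).} Use the phase-gradient state $\lvert\Gamma_b\rangle=\bigotimes_{j=1}^{b}R(\pi/2^{j})\lvert+\rangle$, which up to ordering is the Fourier state $\mathrm{QFT}\lvert 1\rangle$. Adding an integer $k$ into this register, controlled on the target qubit, multiplies it by $e^{i\pi k/2^{b}}$ and leaves the register unchanged. So $R(\pi/2^j)$ is implemented by a controlled addition of $2^{b-j}$. Since the low bits are untouched, this is a controlled increment acting only on the top $j$ qubits of the register.

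A naive ripple increment costs $O(j)$, which would sum to $O(Gb)$. To avoid this I would instead implement the rotation by gate teleportation with a single consumable qubit $\lvert\psi_j\rangle=R(\pi/2^j)\lvert+\rangle$. One CNOT and a measurement apply $R(\pm\pi/2^j)$, each sign with probability $1/2$. On the wrong sign we apply the correction $R(\pi/2^{j-1})$ the same way, and so on down to $j\le 2$. Each gate then needs an expected $O(1)$ resource qubits, with geometrically decreasing probability of needing level $j-\ell$.

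\textbf{Step 3 (worst-case bound).} Across all $G$ gates, the number of correction rounds is a sum of independent geometric variables. By a Chernoff bound it exceeds $cG$ with probability $e^{-\Omega(G)}$. If the budget is exceeded, the circuit aborts and outputs an arbitrary state. When $G\gtrsim\log(1/\epsilon)$ this adds error at most $\epsilon/3$. Otherwise I would raise the budget to $c(G+\log(1/\epsilon))$, which the same tail bound covers. This makes the gate count a worst-case bound.

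\textbf{Step 4 (supplying the resource qubits, the main obstacle).} The hard part is producing the consumable $\lvert\psi_j\rangle$ qubits cheaply. Synthesizing each one with Ross--Selinger at error $\epsilon/G$ would reintroduce the $\log G$ overhead. My plan is to regenerate them from the single phase-gradient register $\lvert\Gamma_b\rangle$, which is prepared once.

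\emph{Regeneration.} A fresh $\lvert\psi_j\rangle$ is obtained by applying the catalytic controlled-increment from Step 2 to a $\lvert+\rangle$ ancilla. To avoid an $O(j)$ cost per copy, I would batch the requests: since the register is an eigenstate of addition, a single addition of a multi-bit integer can serve many rotations at once. The aim is to charge each adder's length to the $O(1)$ geometric-chain cost.

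\emph{Preparing $\lvert\Gamma_b\rangle$.} I would not synthesize all $b$ qubits independently, since that costs $\Theta(b\log(b/\epsilon))$. Instead I would synthesize only the finest one, $\lvert\psi_b\rangle$, to error $\epsilon/3$ using $O(\log(1/\epsilon))$ gates. The coarser qubits would then be generated exactly from it using dyadic teleportation or doubling steps, costing $O(1)$ expected gates per level, for $O(b)$ in total.

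This last step is where I expect the real difficulty. The issue is showing that $b=O(\log(G/\epsilon))$ gradient qubits can be built with total cost $O(G+\log(1/\epsilon))$ rather than $O(\log(1/\epsilon)\log(G/\epsilon))$. The point to exploit is that $b\le\log G+\log(1/\epsilon)+O(1)$ is itself $O(G+\log(1/\epsilon))$, so only $O(1)$ amortized work per gradient qubit is affordable.
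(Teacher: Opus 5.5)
Your Steps 1--3 match the paper: truncation at $j\approx\log_2(G/\epsilon)$, injection with recursive $R(2\theta)$ corrections, and a tail-bounded cutoff. Step 4, which you rightly flag as the crux, has a genuine gap, and the mechanism you propose there cannot work. Building the coarser gradient qubits from one synthesized copy of $R(\pi/2^b)\lvert+\rangle$ by ``doubling'' is circular. Applying $R(\theta)$ by injection needs the correction $R(2\theta)$ half the time, which is exactly the coarser resource being built; without it, the accumulated angle performs a symmetric random walk in multiples of $\theta$. More decisively, no exact Clifford$+T$ procedure can do this cheaply. In any measurement branch, the amplitudes are polynomials of degree at most $m$ in $\zeta=e^{i\pi/2^b}$, where $m$ is the number of copies used (catalysts included), with coefficients in $\mathbb{Q}(e^{i\pi/4})$. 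Over that field $\zeta$ has degree $2^{b-2}$, so producing $R(\pi/2^{b-k})\lvert+\rangle=R(2^k\pi/2^b)\lvert+\rangle$ exactly (for $b-k\ge 3$) forces $m\ge 2^k$. The missing ingredient is the paper's \Cref{thm:optimalphase}: an $O(n)$-gate unitary preparation of $\lvert\mathrm{grad}_n\rangle$ to error $2^{-n}$. It uses an LCU expansion of $e^{-iH}$ whose $1$-norm is only $e^{\pi}$, a linear-size SELECT (tree encoding of the gaps plus Holmgren--Rothblum multiselection), a linear-size PREPARE, and $O(1)$ rounds of exact amplitude amplification. Applied with $n=O(\log(G/\epsilon))$, it costs $O(\log G+\log(1/\epsilon))=O(G+\log(1/\epsilon))$.

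Your regeneration scheme is also not yet an argument. Each fresh level-$j$ copy made by a catalytic controlled addition costs order $j$ gates. Which levels get consumed is decided adaptively by measurement outcomes, and you have not shown how batching charges adder lengths to $O(1)$ per call; for example, $G$ gates all at level $b$ would cost $\Theta(Gb)$ unbatched. The paper avoids adders entirely with the phase-halving gadget (\Cref{fig:rotation-resource-conversions}, \Cref{prop:phasehalving}). It holds a catalyst copy of $R(\theta)\lvert+\rangle$ at every level, and one request for $R(2\theta)$ plus $O(1)$ gates applies $R(\theta)$ to the target while depositing a fresh bank copy. At each level, injection (bank full) and halving (bank empty) alternate, so $H\le I$. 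Hence every call costs $O(1)$ gates, and the Hoeffding bound on the number of injections limits the total to $O(G+\log(1/\epsilon))$ calls except with probability $\epsilon/4$.
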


The implementation described in \Cref{thm:dyadic} can also be made extremely ancilla-efficient, using at most $O(\log(G/\epsilon))$ ancillas in total. To do this, simply divide up the original $G$ gate circuit into $M\sim G/G'$ subcircuits of size $G'=\Theta(\min\{G,\log(G/\epsilon)\})$ and use the theorem to implement each subcircuit with error $\epsilon/M$. The number of ancillas needed for a given subcircuit is at most the total gate cost $O(\log(G/\epsilon))$ and these ancillas can be reused.

We note that \Cref{thm:dyadic} is novel even for a simple constant-depth circuit with only single-qubit gates. In particular, consider the \emph{phase gradient state}, defined as
\begin{equation}
|\mathrm{grad}_n\rangle=\bigotimes_{j=0}^{n-1} R(\pi/2^j)|+\rangle,   
\label{eq:pg}
\end{equation}
which can clearly be prepared using $n$ $Z$-rotations and $n$ Hadamard gates. If we set $\eps=2^{-n}$, an optimal construction of this matching \Cref{thm:dyadic} would use $O(n)$ gates over $\calG$. 

Previously, Jones~\cite{JonesFourier14} showed how to prepare this state using an adaptive quantum circuit with error $2^{-n}$ with expected gate count $O(n\log n)$. 
We show that neither the log factor, the expected runtime bound, nor the adaptivity is necessary in Jones' result; we give an optimal unitary state preparation algorithm with $O(n)$ worst-case cost.

\begin{restatable}[Optimal phase gradient state preparation]{theorem}{phasegrad}
For any $n\geq 1$, there is a unitary circuit that prepares the phase gradient state $|\mathrm{grad}_n\rangle$ to within error $2^{-n}$ using $O(n)$ gates over the set $\calG$.
\label{thm:optimalphase}
\end{restatable}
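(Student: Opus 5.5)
The plan is a bootstrapping (doubling) recursion whose cost obeys $T(n)=T(n/2)+O(n)$, which solves to $T(n)=O(n)$. The base case is $n=O(1)$, handled by a constant-size Clifford+$T$ approximation, for example via \cite{ross2014optimal}. The recursion rests on two standard facts. First, $|\mathrm{grad}_n\rangle$ is a product state. Second, up to relabelling and a global phase, it is the Fourier state $2^{-n/2}\sum_{x}e^{2\pi i x/2^n}|x\rangle$, which is an eigenvector of the modular increment/adder. Consequently, adding an $m$-bit register value $y$ into a gradient register (with a ripple-carry adder of $O(m)$ gates over $\calG$) kicks back the phase $e^{\pm 2\pi i y/2^{m}}$ onto $|y\rangle$. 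This holds exactly if the gradient register is exact, and with error at most that of the register otherwise.

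The inductive step goes from size $m$ to size $2m$. The high-significance qubits of $|\mathrm{grad}_{2m}\rangle$ are exactly the qubits of $|\mathrm{grad}_m\rangle$, so they are inherited directly. The $m$ new low-significance qubits must carry the phase $e^{2\pi i x/2^{2m}}$ on $|x\rangle$, for $x$ ranging over $m$-bit integers. My first attempt is to prepare these qubits in $|+\rangle^{\otimes m}$ and add $x$, shifted appropriately, into a $2m$-bit target register whose top half is the already-available $|\mathrm{grad}_m\rangle$. The phase kickback then splits into two parts. The top half contributes a phase that is exact up to the error of $|\mathrm{grad}_m\rangle$. The bottom half contributes a residual angle bounded by $2^{-m}$ times a small quantity. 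I would try to handle that residual self-similarly by the same construction one level down, so that it is absorbed into the $T(n/2)$ term rather than requiring fresh synthesis. All operations here are unitary: adders, Hadamards and CNOTs. So the final circuit is unitary, as the statement requires, and the gradient register used for kickback is returned intact because it is an eigenvector.

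The error bookkeeping is that the target is accuracy $2^{-2m}$ at size $2m$. I would strengthen the induction hypothesis to error $2^{-cm}$ for a suitable constant $c>1$ at size $m$, paid for by running the recursion on a slightly larger register, so that each kickback step loses at most a constant factor in the exponent. I would then verify that the geometric sum of errors stays below $2^{-n}$.

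I expect the main obstacle to be precisely the precision gain in the inductive step. An $m$-qubit gradient register by itself only resolves angles of size $2^{-m}$. Naively synthesizing each new tiny rotation $R(\pi/2^j)$ to accuracy $2^{-n}$ would cost $\Theta(n)$ per qubit, and hence $\Theta(n^2)$ in total, which is the overhead to be avoided. The crux is therefore to arrange the addition so that the missing low-order phase is a rescaled copy of a smaller gradient-state problem, rather than $m$ independent synthesis problems.
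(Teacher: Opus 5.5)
\textbf{There is a genuine gap: the inductive step is missing.} The proposal never constructs the step from $m$ to $2m$, and you say so yourself: the precision gain is left as something you ``would try''. The mechanism you sketch cannot supply it. Adding an integer $y$ into an $m$-qubit Fourier register multiplies it by $e^{\pm 2\pi i y/2^m}$. So every phase obtainable by kickback from $|\mathrm{grad}_m\rangle$ is an integer multiple of $2\pi/2^m$. The $m$ new qubits need the phase $2\pi x/2^{2m}$ on $|x\rangle$ with $0\leq x<2^m$, which lies entirely below that resolution.

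Consider the bottom half of your $2m$-bit target register. If it is anything other than the low half of $|\mathrm{grad}_{2m}\rangle$ (the very state you are building), adding $x$ has one of two outcomes. Either nothing reaches the top half ($\lfloor x/2^m\rfloor=0$), or a carry reaches it that depends on the bottom half's contents and leaves $|x\rangle$ entangled with it, giving at best a randomized rounding of the phase. In neither case do you get the pure phase $e^{2\pi i x/2^{2m}}$. The ``residual'' is therefore the whole phase, not a small correction. Observing that the needed angles are exactly $2^{-m}$ times those of $|\mathrm{grad}_m\rangle$ does not reduce the problem. Dividing phases by $2^m$ is not something adders, Hadamards and CNOTs can do; it is the original problem. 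Even the paper's phase-halving gadget needs a catalyst already at the halved angle.

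\textbf{The error accounting does not close either.} You inherit the top half directly, so it carries the level-$m$ error $2^{-cm}$, while level $2m$ demands $2^{-2cm}$. Some step would have to square the error of the inherited qubits, but errors in a kickback register propagate only linearly. The top qubits need precision $2^{-\Theta(n)}$ just as much as the new ones. So the relevant size parameter, $\log(1/\epsilon)=\Theta(n)$, never halves, and your construction does not yield the recurrence $T(n)=T(n/2)+O(n)$.

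\textbf{What the paper uses instead.} The missing idea is a source of fine precision that involves no small-angle synthesis at all. The paper expands $e^{-iH}$ as a Taylor-series LCU whose coefficient $1$-norm is at most $e^{\pi}$.
\begin{itemize}
\item Tuples $(\ell_1,\ldots,\ell_k)$ are encoded as gaps between ones in an $M=128n$-bit string. The tiny weights $2^{-\sum_j\ell_j}$ then come for free from $|+^M\rangle$ (\Cref{lem:plus}).
\item SELECT costs $O(n)$ gates, using a tree data structure plus Holmgren--Rothblum multiselection.
\item PREPARE is a coherently run rejection sampler, with the $(\sqrt{\pi}/2)^k$ factor applied by the same SELECT trick on binary digits.
\item A constant number of rounds of exact amplitude amplification turns the heralded preparation of \Cref{thm:herald} into a unitary circuit.
\end{itemize}
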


\paragraph{Proof overview of \Cref*{thm:dyadic}.} 
We start by explaining the proof of \Cref{thm:dyadic} (proved in \Cref{sec:dyadic}), which is easy to understand if we take \Cref{thm:optimalphase} for granted.

The phenomenon of mass production motivates why such low-overhead compilation results might be possible at all. 
Mass production theorems~\cite{Uhlig1974,Uhlig1992,Kretschmer2023,HKW25,GKW26} show how some tasks can be performed multiple times at the cost of one. For example, Kretschmer~\cite{Kretschmer2023} shows how to prepare superpolynomially many copies of an arbitrary $n$-qubit state with only $O(2^n)$ gates, which is the cost of preparing one copy of an arbitrary state.

With this perspective, let us analyze the problem of compiling a circuit expressed over the dyadic gate set. 
To exploit mass production, we will implement the various dyadic rotations by synthesizing some large resource state at the beginning. 
The natural resource states to prepare are $R(\pi/2^j)|+\rangle$ for every $j$ up to some maximum cutoff value dictated by the target error, since one copy of that state allows us to implement the gate $R(\pi/2^j)$ with probability $1/2$ using the standard technique of state injection. This resource state is exactly the phase gradient state, which we have assumed we can prepare optimally using \Cref{thm:optimalphase}.

However, there remains one problem with this strategy.  The state injection gadget is probabilistic and fails with probability $1/2$. When it fails, the required correction is another dyadic gate with twice the angle. Naively, we might need to stockpile a large collection of resource states to account for all possibilities, which would go over our budget.
Instead, we use a gadget we call the ``phase halving gadget'' first introduced by Gidney and Fowler~\cite{gidney2019efficient} (see \Cref{fig:rotation-resource-conversions}) that is similar to the state injection gadget, but instead of consuming a state to apply a gate, this gadget uses one single-qubit rotation and a catalyst resource state to produce two copies of the resource state corresponding to half the rotation angle. This allows us to maintain a bank of only $O(\log(1/\eps))$ resource states, which are consumed \textit{and replenished} as the algorithm proceeds, by the state injection gadget and phase halving gadget, respectively.

\paragraph{Proof overview of \Cref*{thm:main}.} 

To prove our main result for circuits composed of arbitrary one- and two-qubit gates (\Cref{thm:main}), we will again use the state injection and phase halving gadgets, as well as some additional tools. For simplicity, we first consider the special case where the target error is $\eps=1/\sqrt{G}$, which might already suffice for many applications. This special case is discussed in more detail in \Cref{sec:fixederror}.

So fix $\eps=1/\sqrt{G}$ for now. We shall assume without loss of generality that all gates in the circuit are either CNOT gates, $H$ gates,  or $Z$-rotations $R(\theta)$ with $\theta\in (0,2\pi)$ (since any one- or two-qubit gate can be expressed as a product of a constant number of such gates). We begin by approximating each of these single-qubit rotation angles by nearby angles on a uniform grid. In particular, we aim to replace each $R(\theta)$ with gates of the form $R(2\pi j/D)$ for $D=G^{3/4}$ and $j\in[D]$. At first, this appears to be too coarse an approximation, since rounding a rotation $R(\theta)$ to its closest multiple of $2\pi/D$ incurs $\sim 1/D$ error and thus the circuit has total error $\sim G/D \gg 1$. However, if we use the strategy of randomized compilation, introduced independently by Campbell~\cite{Cam17} and Hastings~\cite{Has17}, this gets us error $1/D^2$ per gate for a total error of $G/D^2= 1/\sqrt{G}$. This involves randomly choosing and then implementing a unitary corresponding to one of the two nearest grid points $R(2\pi j/D)$ or $R(2\pi(j+1)/D)$ with prescribed probabilities.

It remains to implement a circuit where all gates are $H,\mathrm{CNOT}$, or $Z$-rotations by angles that are integer multiples of $2\pi/D$. To implement the latter gates we shall use resource states just like in the dyadic case. Rather than preparing two copies of the phase gradient state, here we prepare two copies of each state $R(2\pi j/D)\ket{+}$ for all $j \in [D]$. In this case, since we only need inverse polynomial error, these $O(G^{3/4})$ states can be prepared straightforwardly using the Ross--Selinger algorithm at total cost $O(G^{3/4}\log G)$. Combining this with the state injection and phase halving gadgets yields our implementation protocol for the special case $\eps=1/\sqrt{G}$.

To handle the more general case (\Cref{thm:main}) in which $\epsilon$ may be any inverse polynomial function of $G$, we need one more idea. Say for concreteness our target error is $\eps=O(1/G^{9})$. 
Then it is certainly sufficient to discretize all rotations to multiples of $2\pi/G^{10}$ since the overall error will be $O(G/G^{10})$. But to reuse the previous strategy, we would need to prepare a state on $G^{10}$ qubits, which is far too expensive. 
Instead, we use an encoding trick to reduce the size of the resource state.
We can represent any integer in $[G^{10}]$ in base $B$, where $B=\sqrt{G}$. 
Since this is a large base, every number in $\{0,\ldots,G^{10}-1\}$ is merely a 20-digit number. Thus any rotation by an angle that is a multiple of $2\pi/G^{10}$ can be written as a product of $20$ gates, each of which is now from a set of size $B=\sqrt{G}$. This idea allows us to prepare a smaller resource state and achieve the main result. 
This also clarifies why we get gate count $O(G)$ only for a fixed polynomial like $G^{10}$, since the exponent of the polynomial appears multiplicatively in the overhead.

\paragraph{Proof overview of \Cref*{thm:optimalphase}.}

It will be convenient to abuse notation slightly and write
the
$n$-qubit phase gradient state as
\begin{equation}
|\mathrm{grad}_n\rangle\equiv e^{-\ii H}|+^n\rangle\qquad  \qquad H=\pi\sum_{j=1}^{n} 2^{-j}Z_j.
\end{equation}
Up to a global phase, this agrees with our definition in~\cref{eq:pg}.

The main idea is to write
\begin{equation}
e^{-\ii H}=\sum_{k=0}^{\infty} \frac{1}{k!}\sum_{\ell_1,\ell_2,\ldots \ell_k=1}^{n} \prod_{j=1}^{k} (-\ii\pi 2^{-\ell_j} Z_{\ell_j})
\end{equation}
where the RHS is a linear combination of unitaries, and then to use the LCU method \cite{ChildsWiebe12,BCCKS15} to implement it. One way to see that this might be a promising approach is that the $1$-norm of the coefficients appearing on the RHS, which in many applications dominates the cost of LCU state preparation, is at most
\begin{equation}
\sum_{k=0}^{\infty}\frac{1}{k!}\left(\sum_{\ell=1}^{\infty}(\pi/2^\ell)\right)^k=e^{\pi}
\end{equation}
which is a constant independent of $n$. So in this case there is no large overhead arising from the 1-norm. This is good news but obtaining a linear time algorithm is nevertheless delicate; it requires a $O(n)$-runtime implementation for each subroutine needed for the LCU technique, see \Cref{sec:phasegrad} for details. For instance, one challenge is to implement a ``SELECT'' operation
\begin{equation}
|(\ell_1,\ell_2,\ldots, \ell_k)\rangle|\phi\rangle\rightarrow |(\ell_1,\ell_2,\ldots, \ell_k)\rangle\prod_{j=1}^{k} Z_{\ell_j}|\phi\rangle.
\end{equation}
which takes as input a (suitably encoded) tuple of positive integers $(\ell_1,\ell_2,\ldots, \ell_k)\in [n]^{k}$ such that $\sum_{j}\ell_j\leq O(n)$ and applies a corresponding Z-type Pauli to a given $n$-qubit state $|\phi\rangle$ (note that without the restriction on $\sum_{j}\ell_j$, the number of input data bits could exceed our runtime budget). Our implementation of this SELECT operation works with a tailored encoding of the input $(\ell_1,\ell_2,\ldots, \ell_k)$ and makes use of recently developed fast classical circuits for multiselection due to Holmgren and Rothblum \cite{HolmgrenRothblum24}.

\paragraph{Discussion.}
We have shown that circuits composed of one- and two-qubit gates can be compiled into adaptive circuits over a discrete gate set with only a constant overhead. The protocol that achieves this is also somewhat practical--- the ingredients we need are just simple circuit gadgets and an adaptive strategy for managing certain single-qubit resource states. Moreover, as we discussed above (and detail in \Cref{sec:fixederror}), the protocol is even easier to implement if our target diamond-distance error is at most $1/\sqrt{G}$, which is suitable for most practical applications. We also showed that even better scaling with error $\epsilon$ is possible for compilation of circuits containing dyadic gates (\Cref{thm:dyadic}), though we caution that the optimal phase gradient state preparation (\Cref{thm:optimalphase}) used in the protocol requires complex reversible circuits to be implemented coherently and is not practical in the near term.   However, we note that it is possible to directly substitute Jones's phase gradient state algorithm~\cite{JonesFourier14} in our protocol (described in \Cref{sec:dyadic}). That would give a nearly optimal and significantly more practical expected-cost version of \Cref{thm:dyadic} with a slightly worse gate count than the one we obtain, namely $O(G + \log(1/\eps)\log\log(1/\eps))$. 

A natural open question that we leave for future work is whether this kind of constant-overhead compilation can be achieved without the use of full adaptivity, for example via a mixed unitary channel implemented by randomly choosing a circuit from some distribution and then implementing it on a quantum computer.

The rest of this paper is organized as follows. In \Cref{sec:dyadic} we prove \Cref{thm:dyadic} assuming \Cref{thm:optimalphase}. In \Cref{sec:fixederror} we describe a protocol that proves \Cref{thm:main} in the special case $\epsilon=1/\sqrt{G}$. Then in \Cref{sec:mainresult} we prove \Cref{thm:main} in the general case. Finally, in \Cref{sec:phasegrad} we prove \Cref{thm:optimalphase}.

%===============================================================================

\section{Circuits with dyadic gates}
\label{sec:dyadic}

In this section we prove \Cref{thm:dyadic}.

First, we delete gates in the original circuit with tiny rotation angles that can safely be neglected within our error budget. Let $K=\lceil\log_2(2\pi G/\epsilon)\rceil$. Deleting all rotation gates $R(\pi/2^j)$ with $j>K$
incurs at most an overall error of
\begin{equation}
G\cdot \pi/2^{K+1}\leq \epsilon/4
\end{equation}
in diamond norm distance. 
In the remainder we assume that all rotations are of the form $R(\pi/2^j)$ with $0\leq j\leq K$.

To implement these gates we will use two quantum circuit gadgets shown in \Cref{fig:rotation-resource-conversions}, coupled with a novel strategy for stockpiling and deploying the resource states used within these gadgets.

\begin{figure}[htbp!]
\centering

\begin{minipage}[c]{0.39\textwidth}
\centering

\resizebox{0.98\linewidth}{!}{
\begin{quantikz}[
  row sep=0.55cm,
  column sep=0.45cm
]
\lstick{$\lvert\psi\rangle$}
  & \ctrl{1}
  & \qw
  & \gate{R(2\theta)}
  & \qw
    \rstick{$R(\theta)\lvert\psi\rangle$}
\\
\lstick{$R(\theta)\lvert+\rangle$}
  & \targ{}
  & \meter{}
  & \control{}\setwiretype{c}\vcw{-1}
\end{quantikz}
}

\vspace{1.0em}

\end{minipage}%
\hspace{0.015\textwidth}%
\begin{minipage}[c]{0.012\textwidth}
\centering
\begin{tikzpicture}
  \draw[
    densely dotted,
    line width=0.55pt
  ] (0,0) -- (0,3cm);
\end{tikzpicture}
\end{minipage}%
\hspace{0.015\textwidth}%
\begin{minipage}[c]{0.555\textwidth}
\centering

\resizebox{\linewidth}{!}{
\begin{quantikz}[
  row sep=0.50cm,
  column sep=0.24cm
]
\lstick{$R(\theta)\lvert+\rangle$}
  & \targ{}
  & \targ{}
  & \ctrl{1}
  & \ctrl{2}
  & \qw
  & \ctrl{3}
  & \qw
  & \ctrl{3}
  & \qw
  & \ctrl{2}
  & \ctrl{1}
  & \rstick{$R(\theta)\lvert+\rangle$}
\\
\lstick{$\lvert\psi\rangle$}
  & \ctrl{-1}
  & \qw
  & \targ{}
  & \qw
  & \ctrl{2}
  & \qw
  & \qw
  & \qw
  & \ctrl{2}
  & \qw
  & \targ{}
  & \rstick{$R(\theta)\lvert\psi\rangle$}
\\
\lstick{$\lvert\phi\rangle$}
  & \qw
  & \ctrl{-2}
  & \qw
  & \targ{}
  & \ctrl{1}
  & \qw
  & \qw
  & \qw
  & \ctrl{1}
  & \targ{}
  & \qw
  & \rstick{$R(\theta)\lvert\phi\rangle$}
\\
\lstick{$\lvert0\rangle$}
  & \qw
  & \qw
  & \qw
  & \qw
  & \targ{}
  & \targ{}
  & \gate{R(2\theta)}
  & \targ{}
  & \targ{}
  & \qw
  & \qw
  & \qw
  \rstick{$\ket{0}$}
\end{quantikz}
}

\vspace{1.0em}

\end{minipage}

\caption{State injection (left) and phase-halving (right) gadgets }
\label{fig:rotation-resource-conversions}
\end{figure}

The circuit shown on the left is the familiar \textit{state injection gadget} which adaptively implements the unitary $R(\theta)$. It consumes one copy of the state $R(\theta)|+\rangle$ and with probability $1/2$ (classically controlled on the measurement outcome) also uses an $R(2\theta)$ gate. It is easy to check that this gadget has the claimed input/output behavior. The corresponding resource conversion is summarized as follows in expectation:
\begin{equation}
R(\theta)\lvert+\rangle
  +\frac{1}{2}R(2\theta)
  \xRightarrow{}
  R(\theta).
  \label{eq:res1}
\end{equation}

The circuit shown on the right is a \textit{phase-halving gadget}, first introduced in \cite{gidney2019efficient}. It requires a resource state $R(\theta)|+\rangle$, which is a catalyst that is returned in the same state at the end of the circuit. This gadget applies the unitary $R(\theta)$ twice, to each of two input states, and consumes the unitary $R(2\theta)$. 
We analyze this gadget in more detail and establish this input/output behavior in \Cref{prop:phasehalving}. Its functionality is summarized as follows:
\begin{equation}
R(2\theta)
\xRightarrow{\,R(\theta)\lvert+\rangle\,}2R(\theta).
\label{eq:res2}
\end{equation}

Both of these gadgets make use of resource states, so we will need to have some of those around. A collection of resource states that contains one copy of each dyadic rotation angle (up to some maximum) is exactly the phase gradient state defined in \cref{eq:pg}:
$|\mathrm{grad}_K\rangle=\bigotimes_{j=0}^{K-1} R(\pi/2^j)|+\rangle$.

We will initially prepare two copies of the $(K+1)$-qubit phase gradient state. 
For each $0\leq j\leq K$ we then have two copies
\begin{equation}
R(\pi/2^j)|+\rangle \otimes R(\pi/2^j)|+\rangle
\label{eq:twocopies}
\end{equation}
We call one copy the \textit{catalyst} and the other copy the \textit{bank}. The catalyst copy will always be present throughout the course of implementing the circuit and is only ever used in the phase-halving gadget (which returns it unharmed). On the other hand, the bank copy can be withdrawn during the algorithm when we use the state injection gadget (that consumes it), and later deposited back at some later point when we use the phase-halving gadget to generate a new copy. In particular, our strategy for implementing each single-qubit dyadic rotation $R(\theta)$ in the circuit is described by \Cref{alg:recursive_rotation}.

\begin{algorithm2e}[htbp!]
\caption{Implement a single-qubit rotation}
\label{alg:recursive_rotation}

\KwIn{An angle $\theta$ specifying the gate $R(\theta)$ to be implemented.}

\If{$R(\theta)$ is the identity gate }{
    Do nothing.
}
\Else{
    \If{there are two copies of $R(\theta)\ket{+}$}{
        Withdraw the bank copy and use it to implement $R(\theta)$ via the state injection gadget. If the correction $R(2\theta)$ is needed, implement it via \Cref{alg:recursive_rotation}.
    }
    \ElseIf{there is one copy of $R(\theta)\ket{+}$}{
        Perform the desired gate $R(\theta)$ and at the same time prepare a new copy of $R(\theta)\ket{+}$ to be deposited into the bank,  using the phase-halving gadget. This requires the gate $R(2\theta)$ which we implement via \Cref{alg:recursive_rotation}.
    }
}
\end{algorithm2e}

Suppose our implementation of the entire circuit recursively induces $N$ calls to \Cref{alg:recursive_rotation} in total. 
Instead of waiting for all $N$ calls to finish, we can terminate after some fixed number of $t_{\mathrm{max}}$ calls. 
Let $p = \Pr[N > t_{\max}]$. Then with probability $1-p$, we end up implementing the ideal circuit and with probability $p$ we implement something else (which could possibly be far from the ideal circuit). This incurs an error of at most $2p$ in diamond norm distance. 

Using \Cref{thm:optimalphase}, we can prepare two copies of the phase gradient state to error $\epsilon/4$ using $O(K) = O(\log(G/\epsilon))$ gates, which translates to an error of $\epsilon/4$ in diamond norm distance for using the imperfect phase gradient states in \Cref{alg:recursive_rotation}. 

Therefore, by the triangle inequality of the diamond norm distance, the overall error of the process comes only from the error in truncating tiny rotation angles (at most $\epsilon/4$), the error in preparing the resource states (at most $\epsilon/4$),  and the error (at most $2p$) from imposing the cutoff. 

Below we show that $p= \Pr[N > t_{\max}] \leq \epsilon / 4$ when the cutoff value is given by $t_{\mathrm{max}}=O(G+\log(1/\epsilon))$. Each of the $N$ calls to \Cref{alg:recursive_rotation} uses $O(1)$ elementary gates from our gate set $\mathcal{G}$. So the total gate cost is
\begin{equation}
O(K)+O(t_{\max})=O(G+\log(1/\epsilon)),
\end{equation}
as claimed, where the first term accounts for the phase gradient state preparation. 

Let $I$ and $H$ denote the total numbers of state-injection and phase-halving calls, respectively. Hence $N = I + H$. 
For each dyadic angle $\theta=\pi/2^j$, the calls requesting $R(\theta)$ alternate, starting with the state injection because every bank is initially full. Thus $H\leq I$. 

Each state injection terminates the current recursion (i.e., no correction is needed) with probability $1/2$. 
For each $i\in [I]$, let $X_i\in \{0,1\}$ be an independent fair bit where $X_i=1$ means no correction is required at the $i$-th state-injection call. 
Since there are at most $G$ original recursion chains, then if $I > m$, we must have $\sum_{i=1}^m X_i < G$. 
Therefore, for any integer $m\geq 4G$, Hoeffding's inequality gives
\begin{equation}
    \Pr[I > m] \leq \Pr[\mathrm{Bin}(m, 1/2) < G] \leq e^{-m/8}. 
\end{equation}
Let us choose $m$ such that $e^{-m/8}\leq \epsilon/4$ and $m\geq 4G$, i.e., $m = \max \{4G, \lceil 8\ln(4/\epsilon) \rceil \}$. Set $t_{\max} = 2m$. 
Then
\begin{equation}
    \Pr[N > t_{\max}] \leq \Pr[I > m] \leq \epsilon/4, 
\end{equation}
where $t_{\max} = O(G + \log(1/\epsilon))$ as claimed. 

\begin{remark}[$T$-count]
\label{remark:tcount}
The above algorithm also uses $O(G + \log(1/\epsilon))$ $T$ gates. Furthermore, we show below that when $\log(1/\eps) = o(G)$, i.e., when we do not need exponentially small error, the circuit has $T$ count $\leq 8G + o(G)$. In particular, each $Z$-rotation only costs roughly 8 $T$ gates.

To see this, observe that the phase gradient state uses $O(\log(G/\eps)) = o(G)$ total gates by assumption.
Then the only non-Clifford resources required in this protocol appear in the phase-halving gadget. We show in \Cref{sec:phasehalving} that each phase-halving gadget can be implemented using 4 $T$ gates.  
So it suffices to show that $\E[H] \leq 2G$. 
Since there are at most $G$ original recursion chains and each state injection independently terminates the current recursion with probability $1/2$, then $\E[I] \leq 2G$. Since $H\leq I$, we have $\E[H] \leq 2G$ and thus each single-qubit dyadic $Z$-rotation costs $8$ $T$ gates in expectation. 
This can be made worst-case using a sharper analysis with Hoeffding's inequality.
\end{remark}

\section{Circuits with general gates and inverse square-root error}
\label{sec:fixederror}

In this section we describe a simple proof of \Cref{thm:main} for the special case where the target error is $\epsilon=1/\sqrt{G}$.

Our strategy involves first discretizing the set of possible non-Clifford gates in the circuit to a uniform grid. To this end, let $D$ be a positive integer and consider
\begin{equation}
\mathcal{S}_D=\{R(2\pi j/D): 0\leq j\leq D-1\}.
\label{eq:grid}
\end{equation}

If we have a single-qubit $Z$ rotation $R(\theta)$ for some $\theta\in [0,2\pi]$ we can always round $\theta$ to the nearest grid point $2\pi j/D$. In this way we can approximate $R(\theta)$ to within error $O(1/D)$ by a rotation in $\mathcal{S}_D$. Surprisingly, this is not the best we can do: if we approximate $R(\theta)$ by a distribution over gates in $\mathcal{S}_
D$ then we can achieve a better $O(1/D^2)$ scaling \cite{Cam17,Has17}.   

This \textit{randomized rounding} strategy works as follows. First write 
\begin{equation}
\theta=(j+p)2\pi/D
\end{equation}
for some integer $j$ and $0\leq p<1$. We approximate $R(\theta)$ by a randomly chosen $R(\phi)\in \mathcal{S}_D$ such that 
\begin{equation}
\phi=\begin{cases}2\pi j/D & \text{ with probability } 1-p\\
2\pi(j+1)/D & \text{ with probability } p.
\end{cases}
\end{equation}
We claim that the corresponding channel is $O(1/D^2)$-close to the desired unitary $R(\theta)$.
\begin{claim}
\begin{equation}\label{eq:randomized_rounding}
\left\|
\mathbb E_{\phi}\left[
R(\phi)(\cdot)R(\phi)^\dagger
\right]
-
R(\theta)(\cdot)R(\theta)^\dagger
\right\|_\diamond
=O(D^{-2}).
\end{equation}
\end{claim}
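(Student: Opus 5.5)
Since every operator involved is diagonal, I will reduce the claim to a statement about phases on the off-diagonal entries and bound it with a second-order Taylor expansion. Write $\phi=\theta+\delta$, where
\[
\delta = \begin{cases} -2\pi p/D & \text{with probability } 1-p,\\ 2\pi(1-p)/D & \text{with probability } p.\end{cases}
\]
By construction $\mathbb{E}[\delta]=0$ and $|\delta|\le 2\pi/D$. Both channels conjugate by the diagonal unitaries $R(\cdot)$, so I will first subtract the ideal unitary. Composing with the unitary channel $R(\theta)^\dagger(\cdot)R(\theta)$ does not change the diamond norm, so it suffices to bound the distance between the mixed channel $\mathcal{E}(\rho)=\mathbb{E}_\delta[R(\delta)\rho R(\delta)^\dagger]$ and the identity channel.

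\textbf{Step 1: explicit form of $\mathcal{E}$.} The channel $\mathcal{E}$ is a dephasing-type channel. Acting on an operator $\rho$ on the target qubit tensored with any reference system, it leaves the blocks $|0\rangle\langle 0|$ and $|1\rangle\langle 1|$ untouched. It multiplies the $|1\rangle\langle 0|$ block by $c=\mathbb{E}[e^{i\delta}]$ and the $|0\rangle\langle 1|$ block by $\bar c$. Hence
\[
\mathcal{E}-\mathrm{id} = (c-1)\,\mathcal{P}_{10} + (\bar c-1)\,\mathcal{P}_{01},
\]
where $\mathcal{P}_{10}(\rho)=|1\rangle\langle 1|\rho|0\rangle\langle 0|$ and $\mathcal{P}_{01}(\rho)=|0\rangle\langle 0|\rho|1\rangle\langle 1|$.

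\textbf{Step 2: the maps $\mathcal{P}_{10},\mathcal{P}_{01}$ have diamond norm at most $1$.} These maps have the form $\rho\mapsto A\rho B$ with $\|A\|,\|B\|\le 1$, even after tensoring with an identity on the reference system. By Hölder's inequality, $\|A\rho B\|_1\le\|\rho\|_1$, so each map has diamond norm at most $1$. Therefore $\|\mathcal{E}-\mathrm{id}\|_\diamond\le 2|c-1|$.

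\textbf{Step 3: bound $|c-1|$.} This is the only real content, and it is where the mean-zero condition is used. Taylor's theorem gives $|e^{ix}-1-ix|\le x^2/2$. Taking expectations and using $\mathbb{E}[\delta]=0$,
\[
|c-1| = \bigl|\mathbb{E}[e^{i\delta}-1-i\delta]\bigr| \le \tfrac12\mathbb{E}[\delta^2] \le \tfrac12(2\pi/D)^2 .
\]
Combining the steps,
\[
\|\mathcal{E}-\mathrm{id}\|_\diamond \le 4\pi^2/D^2 = O(D^{-2}).
\]
For comparison, deterministic rounding would leave the first-order term $i\delta$ in place and give only $O(1/D)$; randomized rounding cancels exactly that term.

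There is no substantial obstacle. The one subtle point is Step 2, which requires the bound on the diamond norm, with its auxiliary reference system, rather than only the induced trace norm. The map form $\rho\mapsto A\rho B$ handles this cleanly. Alternatively, one could write $\mathcal{E}-\mathrm{id}$ as a combination of the channels $\rho\mapsto\rho$ and $\rho\mapsto Z\rho Z$ and bound the coefficients directly.
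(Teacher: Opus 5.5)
Your proof is correct, and it takes a different route from the paper. The paper specializes Campbell's mixing lemma. If $a$ bounds $\|R(\phi_i)-R(\theta)\|$ and $b$ bounds $\|(1-p)R(\phi_0)+pR(\phi_1)-R(\theta)\|$, then the diamond distance is at most $a^2+2b$. The paper then checks that $a\le 2\pi/D$ and $b\le(2\pi/D)^2$.

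You instead use the fact that all the rotations are diagonal and commute. After factoring out $R(\theta)$, the averaged channel is exactly a dephasing-type map governed by the single scalar $c=\mathbb{E}[e^{i\delta}]$. Note that $|c-1|$ is precisely the paper's quantity $b$. Your computation therefore shows that in the commuting case the $a^2$ term of the mixing lemma is not needed at all. Writing $R(\delta)=I+\Delta$, the quadratic term $\mathbb{E}[\Delta\rho\Delta^\dagger]$ is what produces $a^2$ in general, and it cancels exactly against the first-order terms on the $|1\rangle\langle 1|$ block. The result is a self-contained proof with a better constant: $4\pi^2/D^2$ versus $12\pi^2/D^2$.

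What the mixing lemma buys is generality. It applies verbatim to non-commuting approximating unitaries, which is the setting Campbell and Hastings had in mind. Your argument is tailored to $Z$-rotations, which is all this claim needs.

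One small correction to your closing aside. $\mathcal{E}-\mathrm{id}$ is a linear combination of $\rho\mapsto\rho$ and $\rho\mapsto Z\rho Z$ only when $c$ is real. In general $c$ has a nonzero imaginary part of order $D^{-3}$, so that alternative would also need the term $\rho\mapsto \rho Z-Z\rho$. Your main argument via $\mathcal{P}_{10}$ and $\mathcal{P}_{01}$ is unaffected.
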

\begin{proof}
Let $\phi_0 = 2\pi j/D$ and $\phi_1 = 2\pi(j+1)/D$. 
We specialize Campbell's mixing lemma~\cite[Lemma~1]{Cam17}, which states that if $\norm{R(\phi_0) - R(\theta)}\leq a$, $\norm{R(\phi_1) - R(\theta)}\leq a$, and $\norm{(1-p)R(\phi_0) + pR(\phi_1)  - R(\theta)}\leq b$, then the left-hand side of \cref{eq:randomized_rounding} is at most $a^2 + 2b$. 
In our case, direct computations yield $a\leq 2\pi/D$ and $b\leq (2\pi/D)^2$, which proves the claim. 

Alternatively, the claim also follows from a direct specialization of
Hastings~\cite[Lemma~1 and Eqs.~(8)--(10)]{Has17}.
\end{proof}

\begin{theorem}
Any quantum circuit composed of $G$ arbitrary one- and two-qubit gates can be implemented to within  $\epsilon=1/\sqrt{G}$ error in diamond distance by an adaptive circuit that uses
\begin{equation}
    O(G) \ \text{gates}    
\end{equation}
in the worst case from the gate set $\calG = \{H,T,\mathrm{CNOT}\}$.
\end{theorem}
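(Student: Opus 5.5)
We prove the theorem by combining the randomized-rounding claim above with the bank/catalyst strategy of \Cref{sec:dyadic}, now applied to a uniform grid instead of dyadic angles.

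\textbf{Step 1: normalize and discretize the circuit.} Rewrite each one- or two-qubit gate as $O(1)$ gates from $\{H,\mathrm{CNOT}\}\cup\{R(\theta)\}$. This leaves $G'=O(G)$ gates, at most $G'$ of which are $Z$-rotations. Set $D=2^{r}$, where $r$ is chosen so that $D=\Theta(G^{3/4})$. Using a power of two is the key choice: every grid angle $2\pi j/D$ then satisfies $2\cdot(2\pi j/D)=2\pi(2j)/D$. Hence the set $\mathcal{S}_D$ is closed under the angle doubling demanded by the correction in the state injection gadget and by the phase-halving gadget.

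Replace each $R(\theta)$ independently by the randomly rounded $R(\phi)\in\mathcal{S}_D$. By the claim, each replacement costs $O(D^{-2})$ in diamond distance. Subadditivity over the $G'$ gates gives total error $O(G/D^2)=O(G^{-1/2})$. Choosing the constant in $D$ appropriately makes this at most $\epsilon/4$. Absorbing these constant factors into $\epsilon$ is routine.

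\textbf{Step 2: resource states.} For every $j\in\{1,\dots,D-1\}$, prepare two copies of $R(2\pi j/D)|+\rangle$: one catalyst and one bank. We prepare each copy with the Ross--Selinger method to error $\epsilon/(16D)$. Each preparation costs $O(\log(D/\epsilon))=O(\log G)$ gates, so the total is $O(D\log G)=O(G^{3/4}\log G)=o(G)$. The total preparation error is at most $\epsilon/4$.

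\textbf{Step 3: implement the rotations.} Run \Cref{alg:recursive_rotation} on each rounded rotation. Every recursive request is again for an angle in $2\pi\mathbb{Z}/D$, since doubling keeps $2\pi j/D$ on the grid. The recursion ends when the angle becomes $0 \bmod 2\pi$, which is the identity gate. Each call uses $O(1)$ gates from $\calG$.

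\textbf{Step 4: bound the number of calls.} The counting argument of \Cref{sec:dyadic} carries over verbatim. For each angle, the calls alternate between injection and halving, beginning with injection, so $H\le I$. Each injection terminates its chain with probability $1/2$ independently, and there are at most $G'$ chains. Hoeffding's inequality then gives $\Pr[I>m]\le e^{-m/8}$ for $m\ge 4G'$. We take $m=\max\{4G',\lceil 8\ln(8/\epsilon)\rceil\}$. Since $\epsilon=G^{-1/2}$, the second term is only $O(\log G)$, so $m=O(G)$. We abort after $t_{\max}=2m$ calls, which contributes error at most $2\Pr[N>t_{\max}]\le\epsilon/4$.

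\textbf{Total cost.} The gate count is $O(G)+o(G)+O(t_{\max})=O(G)$ in the worst case. The total error is at most $\epsilon$ by the triangle inequality.

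\textbf{Main obstacle.} The subtle point is Step 3's termination and alternation argument. It requires the angle set to be closed under doubling, so that no resource state outside the prepared bank is ever requested. It also requires every halving chain to reach the identity, which it does within $r=O(\log G)$ doublings. Choosing $D$ a power of two guarantees both properties. With a general $D$, I would instead note that the set $\{2\pi j/D\}$ is still closed under doubling mod $2\pi$. Some chains then never reach the identity, but they still terminate, because each injection independently stops its chain with probability $1/2$. So the probabilistic bound does not actually need termination by doubling, and the power-of-two choice is only a convenience.
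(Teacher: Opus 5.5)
Your proof is correct and follows the paper's argument essentially step for step. You use randomized rounding onto $\mathcal{S}_D$ with $D=\Theta(G^{3/4})$ a power of two, two Ross--Selinger-prepared copies of each $R(2\pi j/D)|+\rangle$ at total cost $O(G^{3/4}\log G)=o(G)$, and the bank/catalyst recursion of \Cref{alg:recursive_rotation} truncated at $t_{\max}=O(G+\log(1/\epsilon))=O(G)$ via the same alternation ($H\le I$) and Hoeffding argument. Your remarks that the grid is closed under angle doubling, and that $N\le 2I$ makes the power-of-two choice inessential, are correct details the paper leaves implicit.
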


\begin{proof}
Let $D$ be a power of two that we will fix shortly. We first decompose every two-qubit gate into a constant number of CNOTs and single-qubit gates. Then we decompose each single-qubit gate using Euler angles into three rotations $R(\theta)$ along with Hadamard gates. Then, we implement each rotation $R(\theta)$ using the randomized rounding strategy described above. At this point, we have (randomly) selected a circuit composed of $O(G)$ gates from the set $\mathcal{G}\cup\mathcal{S}_D$. This implements a channel that is $\delta$-close to $C$, where
\begin{equation}
\delta=O(GD^{-2})  
\end{equation}
We choose $D=\Theta(G^{3/4})$ so that the RHS is at most $1/(4\sqrt{G})=\epsilon/4$.

In the following, we show how to implement the chosen circuit to within $\epsilon/2$ error using $O(G)$ gates from $\calG$. Let $\alpha_k=2\pi k/D$ for each $1\leq k\leq D$. We begin by preparing two resource states
\begin{equation}
R(\alpha_k)|+\rangle\otimes R(\alpha_k)|+\rangle
\end{equation}
for each $k\in [D]$. Here we require each resource state to be prepared with error $\epsilon/8D=\Theta(G^{-5/4})$ so the total error in the tensor product of all $2D$ states is at most $\epsilon/4$. We use the Ross--Selinger compilation method \cite{ross2014optimal} to prepare these states, which uses
\begin{equation}
O(D\log(D/\epsilon))=O(G^{3/4}\log(G))=o(G)
\end{equation}
gates.

Now we can use exactly the same strategy as in the previous section to implement the circuit. In particular, we implement Clifford gates directly, and anytime we need to implement a gate $R(\alpha_k)\in \mathcal{S}_D$ we call \cref{alg:recursive_rotation} with $\theta=\alpha_k$. The analysis of the runtime is then identical to that of the previous theorem, and shows that the process can be terminated after at most 
\begin{equation}
t_{\mathrm{max}}=O(G+\log(1/\epsilon))=O(G)
\end{equation}
calls to \Cref{alg:recursive_rotation}, incurring an error at most $\epsilon/2$. 

The total number of gates used to implement the circuit is then $O(G)$ in addition to the cost of preparing the initial resource states, which we have already shown is $o(G)$.
\end{proof}

\begin{remark}[$T$-count]\label{remark:tcount2}
The same analysis as in \Cref{remark:tcount} also establishes that a circuit with an arbitrary number of Cliffords and $G$ single-qubit $Z$-rotations can be implemented to error $1/\sqrt{G}$ using at most $8G+o(G)$ $T$ gates in the worst case.
\end{remark}

\section{Circuits with general gates and inverse polynomial error\label{sec:mainresult}} 
In this section we prove \Cref{thm:main}. First, note that we can decompose the one- and two-qubit gates into a constant number of gates from $\calG$ and $Z$ rotations, as before. So we may assume (with only a constant-factor increase in $G$) that the circuit is composed of gates from the set $\mathcal{G}\cup \{R(\theta): \theta\in (0,2\pi)\}$. Next, we approximate each rotation angle by the nearest point on a uniformly spaced grid. In particular, each $R(\theta)$ appearing in the circuit can be approximated by the nearest rotation in $\mathcal{S}_D$ in~\cref{eq:grid}, incurring an error which is at most $O(1/D)$ per gate, or $O(G/D)$ for the whole circuit. Choosing $D=\Omega (G/\epsilon)$ then suffices to make the error in this approximation at most $\epsilon/2$. It remains to establish \Cref{thm:main} in the special case where all gates are in $\mathcal{S}_D$ for $D=\Theta(G/\epsilon)$.

We will use one more preprocessing step. Since $\epsilon=1/\mathrm{poly}(G)$, there is a constant $r\geq 2$ such that 
\begin{equation}
(G/\epsilon)^{1/r}=O(\sqrt{G}).
\end{equation}
Fixing this choice of $r$, we choose our grid spacing such that $D=B^r$ for some positive integer $B$. In the proof below we will write each grid point $2\pi j/D$ using its base-$B$ expansion as a number of the form
\begin{equation}
\frac{2\pi}{D}\sum_{\ell=0}^{r-1} x_\ell B^\ell \qquad x_0,\ldots x_{r-1}\in \{0,1,\ldots, B-1\}.
\label{eq:basebexpand}
\end{equation}
The following Theorem is then sufficient to complete the proof of \Cref{thm:main}. Indeed, with our choices of $r,B$ described above, the stated gate count in \cref{eq:expectedrb} is at most $O(G)$ since $r=O(1), \epsilon=1/\mathrm{poly}(G)$ and
\begin{equation}
B=D^{1/r}=\left(O(G/\epsilon)\right)^{1/r}=O(\sqrt{G}).
\end{equation}

\begin{theorem}\label{thm:polynomial_grids}
Let $C$ be a quantum circuit composed of $G$ gates from the set $\mathcal{G}\cup\mathcal{S}_D$. Suppose $D,B$ are both powers of two such that $D=B^r$ for $r\geq 2$. There is an adaptive circuit that implements $C$ to diamond distance $\epsilon$ using 
\begin{align}
O(rG+rB\log(B)\log(\epsilon^{-1}Br\log(B)))
\label{eq:expectedrb}
\end{align}
gates from the set $\calG$ in the worst case.
\end{theorem}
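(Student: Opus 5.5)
We will reduce to the machinery of \Cref{sec:dyadic} by splitting each grid rotation into $r$ rotations indexed by base-$B$ digits, and then running one copy of \Cref{alg:recursive_rotation} per digit level.

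\textbf{Step 1 (digit decomposition).} Write each $R(2\pi j/D)\in\mathcal{S}_D$ using the expansion \eqref{eq:basebexpand}:
\begin{equation}
R(2\pi j/D)=\prod_{\ell=0}^{r-1}R\bigl(2\pi x_\ell B^\ell/D\bigr).
\end{equation}
This replaces $C$ by a circuit with at most $rG$ rotations. Each rotation is drawn from one of the sets
\begin{equation}
\mathcal{S}^{(\ell)}=\{R(2\pi x B^{\ell}/D):0\le x<B\},\qquad \ell=0,\dots,r-1,
\end{equation}
so only $rB$ distinct nontrivial angles $\alpha_{\ell,x}=2\pi xB^\ell/D$ appear.

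\textbf{Step 2 (closure under doubling).} The recursion in \Cref{alg:recursive_rotation} requires that, whenever $R(\theta)$ is requested, the doubled angle $R(2\theta)$ can also be handled. Because $B$ and $D$ are powers of two, $2\alpha_{\ell,x}$ equals either $\alpha_{\ell,2x}$ (if $2x<B$) or $\alpha_{\ell+1,2x-B}$ (if $2x\ge B$). Once $\ell+1=r$, the angle is a multiple of $2\pi$, i.e.\ the identity.

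The set $\mathcal{A}=\{\alpha_{\ell,x}\}$ of at most $rB$ angles is therefore closed under doubling modulo $2\pi$. Every doubling chain terminates at the identity, in the sense that its $2$-adic valuation increases strictly.

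\textbf{Step 3 (resources).} Prepare two copies of $R(\alpha)\ket{+}$ for every $\alpha\in\mathcal{A}$, one designated the catalyst and one the bank. Use the Ross--Selinger method \cite{ross2014optimal}, preparing each state to error $\epsilon/(8rB)$. This costs
\begin{equation}
O\bigl(rB\log(rB/\epsilon)\bigr)
\end{equation}
gates. That is within the stated term $rB\log B\log(\epsilon^{-1}Br\log B)$, with room to spare, which absorbs any bookkeeping cost.

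\textbf{Step 4 (run the bank algorithm).} Run \Cref{alg:recursive_rotation} exactly as in \Cref{sec:dyadic}. The runtime argument carries over verbatim, because it uses only two facts:
\begin{itemize}
\item the calls for each fixed angle alternate between injection and halving, starting with injection, which gives $H\le I$;
\item each injection independently terminates its chain with probability $1/2$, and there are at most $rG$ initial chains.
\end{itemize}
Hoeffding's inequality then gives $\Pr[N>t_{\max}]\le\epsilon/4$ with $t_{\max}=O(rG+\log(1/\epsilon))$. Each call costs $O(1)$ gates.

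\textbf{Error budget.} The total error is at most $\epsilon/4$ (imperfect resource states) plus $2\cdot\epsilon/4$ (truncation at $t_{\max}$), so at most $3\epsilon/4<\epsilon$.

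\textbf{Main obstacle.} The delicate point is checking that the recursion is well founded and that $O(1)$ gates per call suffices when angles are indexed by pairs $(\ell,x)$. Specifically, the recursion must not request an angle outside $\mathcal{A}$, which is why the powers-of-two hypothesis is needed for the carry $2x\ge B$.

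A second subtlety is the additive $\log(1/\epsilon)$ in $t_{\max}$. It must be absorbed into the bound \eqref{eq:expectedrb}, and it is, since $rB\log B\log(\epsilon^{-1}\cdots)\ge\log(1/\epsilon)$. If the closure argument for $2x\ge B$ fails because a carry produces a second nontrivial factor, the fallback is to define the bank over all angles $2\pi k/D$ with $k$ having at most one nonzero base-$B$ digit. This is exactly $\mathcal{A}$, so the plan should go through.
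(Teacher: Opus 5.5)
\textbf{Gap in Step 2: $\mathcal{A}$ is not closed under doubling.} You claim that the set $\mathcal{A}=\{\alpha_{\ell,x}\}$, the angles with a single nonzero base-$B$ digit, is closed under doubling. It is not. When $2x>B$, the carry leaves a remainder in digit $\ell$:
\[
2xB^\ell=(2x-B)B^\ell+B^{\ell+1},
\]
so $2\alpha_{\ell,x}=\alpha_{\ell,2x-B}+\alpha_{\ell+1,1}$, not $\alpha_{\ell+1,2x-B}$. For example, take $B=4$, $x=3$, $\ell=0$ and $r\ge 2$. The doubled angle is $2\pi\cdot 6/D$, and $6=2+1\cdot 4$ has two nonzero base-$4$ digits. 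So $R(2\alpha_{0,3})$ has no resource state in your bank, and \Cref{alg:recursive_rotation} cannot recurse on it.

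Your ``fallback'' redefines exactly the same set $\mathcal{A}$, so it does not resolve this. Nor can you just write $R(2\theta)$ as a product of two rotations from $\mathcal{A}$ and continue. Every phase-halving call at such an angle, and every injection needing a correction there, would then spawn two recursive calls. The calls would no longer form at most $rG$ chains, each ended with probability $1/2$ per injection, so the Hoeffding argument does not carry over verbatim. A warning sign is that your resource cost $O(rB\log(rB/\epsilon))$ beats the stated bound by a $\log B$ factor.

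\textbf{The fix, which is what the paper does.} Take the bank over the doubling-closure of $\mathcal{A}$. With $D=2^m$ and $m=r\log_2 B$, use all angles $2\pi k/2^s$ with $0\le k\le B-1$ and $1\le s\le m$.
\begin{itemize}
\item Each $\alpha_{\ell,x}$ has this form, with $s=m-\ell\log_2 B\ge \log_2 B$.
\item Closure is now trivial, since $2\cdot 2\pi k/2^s=2\pi k/2^{s-1}$, and the chain reaches the identity at $s=0$.
\item There are $mB=rB\log_2 B$ such angles. Preparing two copies of each to error $\epsilon/(4mB)$ via Ross--Selinger costs $O(mB\log(mB/\epsilon))=O(rB\log(B)\log(\epsilon^{-1}Br\log B))$.
\end{itemize}
This is exactly where the $\log B$ factor in the statement comes from, so it is not spare room for bookkeeping.

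With this bank, the rest of your argument goes through and matches the paper's proof. That includes the per-angle alternation giving $H\le I$, the at most $rG$ initial chains, Hoeffding with $t_{\max}=O(rG+\log(1/\epsilon))$, the error budget, and absorbing $\log(1/\epsilon)$ into the second term.
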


\begin{proof}
We present an explicit adaptive process that implements $C$.
Consider the base-$B$ expansion of an angle $\theta=2\pi j/D$ as in \cref{eq:basebexpand}. Each term in the sum is one of the following numbers
\begin{equation}
2\pi B^\ell k/D \qquad 0\leq \ell\leq r-1 \qquad 0\leq k\leq B-1.
\label{eq:nums}
\end{equation}
Let $D=2^m$ so $m = O(r\log(B))$. Since $B$ is
also a power of two, each of the numbers in \cref{eq:nums} is of the form
\begin{equation}
\alpha_{k,s}=2\pi k/2^s \qquad 0\leq k\leq B-1 \qquad 0\leq s\leq m.
\end{equation}

Any gate in $\mathcal{S}_D$ can therefore be implemented by a sequence of at most $r$ gates of the form $R(\alpha_{k,s})$. To implement these gates we shall use a strategy similar to the one from the previous sections. 

We first prepare two copies of the resource state 
\begin{align}\label{eq:base_B_grid_rotation_resource}
    \bigotimes_{s=1}^{m}\bigotimes_{k=0}^{B-1}R(\alpha_{k,s})\ket{+}.
\end{align}
We want the total error in preparing both copies to be at most $\epsilon/2$, so we prepare each single qubit state to within error $\epsilon/(4Bm)$. We use Ross--Selinger synthesis \cite{ross2014optimal} which costs
\begin{equation}
O(mB\log(mB/\epsilon))=O(rB\log(B)\log(\epsilon^{-1}Br\log(B))).
\end{equation}

Once we have prepared our two copies of \cref{eq:base_B_grid_rotation_resource}, we then have two resource states $R(\alpha_{k,s})\ket{+}\otimes R(\alpha_{k,s})\ket{+}$ for every $\alpha_{k,s}$. As before, we think of one of the copies as a bank state and one as a catalyst state.  Whenever we need to apply a rotation $R(\theta)\in \mathcal{S}_D$, we write $\theta$ as a sum of $r$ angles $\alpha_{k,s}$ and use \Cref{alg:recursive_rotation} to implement each of them. The doubled angle rotation satisfies $R(2\alpha_{k,s})=R(\alpha_{k,s-1})$ and may be implemented recursively whenever $s>1$. If $s=1$ the doubled angle rotation is the identity $R(\alpha_{k,0})=I$.

Again we use the same worst-case analysis as in the proof of \Cref{thm:dyadic}. The only difference is that here we have an initial number $rG$ of calls to \Cref{alg:recursive_rotation} that must be processed. The cutoff value $t_{\mathrm{max}}=O(rG+\log(1/\epsilon))$ is then sufficient to ensure that this truncation incurs only $\epsilon/2$ error. The total number of gates used is then
\begin{equation}
O(t_{\mathrm{max}}+rB\log(B)\log(\epsilon^{-1}Br\log(B))),
\end{equation}
as claimed.
\end{proof}

\section{Optimal phase gradient state preparation}
\label{sec:phasegrad}

In this section we prove \Cref{thm:optimalphase}. Below we use the notation $\approx_{\epsilon}$ to denote that the LHS and RHS of an equation are $\epsilon$-close. In addition, we shall abuse notation slightly, writing the
$n$-qubit phase gradient state as
\begin{equation}
|\mathrm{grad}_n\rangle\equiv e^{-\ii H}|+^n\rangle\qquad  \qquad H=\pi\sum_{j=1}^{n} 2^{-j}Z_j.
\label{eq:grad1}
\end{equation}
Up to a global phase, this agrees with our definition in~\cref{eq:pg}. 

To prove \Cref{thm:optimalphase} we first establish that the phase gradient state can be approximately prepared with constant known failure probability in a heralded manner.
\begin{theorem}[Heralded approximate state preparation]
\label{thm:herald}
For any $n\geq 1$ there is a unitary circuit $W$ composed of $O(n)$ gates in total from the set $\calG$, and $a=O(n)$ ancillas, such that
\begin{equation}
(I\otimes\langle0^a|) W|0^{n+a}\rangle=\gamma|\mathrm{grad}_n\rangle+|E\rangle,
\end{equation}
where $\gamma$ is an explicit complex number defined in \cref{eq:gamma_n} satisfying $|\gamma|\geq 0.01$, and $|E\rangle$ is an $n$-qubit state satisfying $\||E\rangle\|\leq 2^{-2n}$. 
\end{theorem}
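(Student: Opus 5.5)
We will realize $e^{-\ii H}$ by the LCU technique applied to the truncated Taylor-type expansion from the overview, arranged so that every subroutine costs $O(n)$ gates. The first step is to fix the expansion. Write
\begin{equation}
e^{-\ii H}=\sum_{k=0}^{\infty}\frac{1}{k!}\sum_{\ell_1,\ldots,\ell_k=1}^{n}\prod_{j=1}^{k}(-\ii\pi 2^{-\ell_j})Z_{\ell_j}.
\end{equation}
The weight of a tuple $(\ell_1,\ldots,\ell_k)$ is $\pi^k 2^{-\sum_j\ell_j}/k!$. We keep only tuples with $k\le c_1 n/\log n$ and $L:=\sum_j\ell_j\le c_2 n$ for suitable constants. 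The discarded $1$-norm is at most
\begin{equation}
\sum_{k}\frac{\pi^k}{k!}\#\{\text{tuples with }\textstyle\sum\ell_j=L\}\,2^{-L}\le\sum_{L>c_2n}\binom{L-1}{k-1}\frac{\pi^k}{k!}2^{-L},
\end{equation}
plus the Taylor tail in $k$. With $c_2$ large, this is at most $2^{-3n}$. The truncated operator $A$ therefore satisfies $\|A-e^{-\ii H}\|\le 2^{-3n}$, and its $1$-norm $\lambda\le e^{\pi}$ is a constant.

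The second step builds PREPARE, a state $\sum_{\text{tuples}}\sqrt{w}\,|\mathrm{enc}(\ell_1,\ldots,\ell_k)\rangle$ with the normalized weights, where the coefficient phases $(-\ii)^k$ are absorbed into SELECT. For the encoding we use a unary/composition encoding: a bit string of length $O(n)$ whose $1$s mark the positions of the partial sums $\ell_1,\ell_1+\ell_2,\ldots$, together with a register holding $k$. The weights factorize nicely. Given $k$ (probability $\propto\pi^k/k!$ times a normalizer), the sequence $\ell_1,\ell_2,\ldots$ is i.i.d.\ geometric with parameter $1/2$, so its boundary string is a sequence of independent fair coins. This suggests preparing
\begin{equation}
\sum_k\sqrt{p_k}\,|k\rangle\otimes|+\rangle^{\otimes c_2 n},
\end{equation}
where the first $k$ ones in the coin string define $\ell_1,\ldots,\ell_k$. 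The cost is $O(\log n)$-qubit state preparation for $|k\rangle$ (cheap, since we need only $O(n)$-bit precision on $O(n/\log n)$ amplitudes, or even fewer support points) plus $O(n)$ Hadamards. Tuples that run off the end of the string, or that exceed $n$, are exactly the truncated ones; they contribute to $|E\rangle$ or are flagged. Each $\ell_j\le n$ is forced by zeroing out any $Z_\ell$ with $\ell>n$.

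The third step, SELECT, which I expect to be the main obstacle, must apply $(-\ii)^k\prod_j Z_{\ell_j}$ to the system register with $O(n)$ gates. The phase $(-\ii)^k$ costs $O(\log n)$ gates via controlled $S$ on the bits of $k$. The Pauli $\prod_j Z_{\ell_j}$ equals $\prod_{q}Z_q^{c_q}$, where $c_q$ is the parity of the number of $j\le k$ with $\ell_j=q$. So we need to reversibly compute the $n$-bit parity vector $c$ from the boundary string and $k$, apply $n$ CZs, and uncompute. Computing $c$ requires multiselection: the boundary string gives positions $P_1<P_2<\cdots$, and $\ell_j=P_j-P_{j-1}$. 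We must first mask the string after the $k$-th one, which is a prefix-count comparison doable with an $O(n)$-size counter circuit. We then need, for every $q\in[n]$, the parity of gaps of length $q$. A naive approach costs $O(n\log n)$. Here I would invoke the Holmgren--Rothblum linear-size multiselection circuits: convert the unary data into the list of gap lengths in binary, with total description $O(\sum_j\log\ell_j)=O(n)$ bits, and then use their linear-size circuit to output the indicator/parity vector. Each classical circuit is made reversible with Bennett compute--copy--uncompute at constant overhead, and Toffolis are expanded over $\calG$ with $O(1)$ gates each.

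Finally, set $W=(\mathrm{PREP}^\dagger\otimes I)\,\mathrm{SELECT}\,(\mathrm{PREP}\otimes I)$ applied after $H^{\otimes n}$ on the system. Projecting the ancillas onto $|0^a\rangle$ gives $A|+^n\rangle/\lambda'$, where $\lambda'$ is the explicit normalizer, so $\gamma=1/\lambda'$ up to a known phase and $|\gamma|\ge e^{-\pi}>0.01$. The error $|E\rangle$ collects the truncation error and the finite precision of the amplitudes $\sqrt{p_k}$, each at most $2^{-2n}/2$. The total gate count is $O(n)$ and the ancilla count is $a=O(n)$.
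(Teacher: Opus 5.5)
Your architecture is the paper's: truncated Taylor LCU, a $|+^M\rangle$ coin string encoding geometric gaps, a separate $(-i)^k$ phase, and $Z_\ell\mapsto I$ for $\ell>n$. (Those last terms shift the state by $\Theta(2^{-n})$, so they must go into the phase of $\gamma$ as in the paper, not into $|E\rangle$.) However, the two subroutines that carry all the difficulty are not shown to cost $O(n)$.

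\emph{PREPARE is a genuine gap.} You call preparing $\sum_{k\le K}\sqrt{p_k}\,|k\rangle$ cheap because it lives on $O(\log n)$ qubits, but it is not.
\begin{itemize}
\item The Taylor tail forces $K=\Theta(n/\log n)$ support points.
\item For $\Theta(n/\log n)$ of them the amplitude must be specified to $\Theta(n)$ bits to reach error $2^{-2n}$.
\item Standard state preparation uses $\Theta(K)$ arbitrary rotations, each costing $\Theta(n)$ gates from $\mathcal{G}$ at that precision, so $\Theta(n^2/\log n)$ in total.
\item A counting argument shows that no method treating the $p_k$ as generic data can be linear.
\end{itemize}
You must exploit the form $\pi^k/k!$, and this is the content of the paper's PREPARE lemma. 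First, a coherent rejection sampler driven only by fair coins outputs $k$ with probability proportional to $4^k/k!$ using $O(K\log K)=O(n)$ gates. It is the sum of four independent samplers for $1/k!$, each built from uniform draws $R_j\in\{0,\ldots,j-1\}$. Second, the factor $(\sqrt{\pi}/2)^k$ is imposed by an LCU-style filter:
\begin{itemize}
\item append $|+^L\rangle$ together with the binary digits $d_\ell$ of $(1-\sqrt{\pi}/2)/2$;
\item apply SELECT to produce the sign $\prod_{j\le k}(1-2d_{\ell_j})$;
\item project the coin register back onto $|+^L\rangle$.
\end{itemize}
This multiplies the amplitude by $\bigl(\sum_\ell 2^{-\ell}(1-2d_\ell)\bigr)^k=(\sqrt{\pi}/2)^k$ up to $2^{-\Omega(n)}$. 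A constant number of rounds of exact amplitude amplification then removes the known constant success probability.

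\emph{SELECT is also not linear as you describe it.}
\begin{itemize}
\item Holmgren--Rothblum multiselection has size $O(n+m\log^3 n)$, so one call with $m=K=\Theta(n/\log n)$ indices into the $n$-bit register costs $\Theta(n\log^2 n)$.
\item Turning the unary boundary string into a binary list of gaps is a compaction step you do not implement.
\item Masking after the $k$-th one with a running counter costs $O(\log K)$ per position, hence $O(n\log n)$.
\item HR returns selected bits, not an indicator or parity vector.
\end{itemize}
The paper's fix is a tree data structure. Each gap is assigned to the least common ancestor of its two bounding $1$-leaves in a complete binary tree over the $M+1$ positions. A gap owned by a node at height $h$ then satisfies $\ell<2^h$ and fits in $h$ bits, for $\sum_h h\,2^{R-h}=O(n)$ bits overall. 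The structure is computed with $O(n)$ gates by a bottom-up pass (counts and leftmost/rightmost $1$-leaves) and a top-down pass (the remaining budget $q_v$, which implements the first-$k$ mask). HR is then called separately for each height $h$: on the $2^h$-bit prefix of the data register, with $2^{R-h}$ indices of $h$ bits, at cost $O(2^h+2^{R-h}h^3)$, which sums to $O(n)$. The Pauli is applied by phase kickback rather than by writing out a parity vector: XOR the selected computational-basis bits into a parity ancilla, apply $Z$, and uncompute.
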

\Cref{thm:optimalphase} then follows by combining the above with exact amplitude amplification as we now describe.

\begin{proof}[Proof of \Cref{thm:optimalphase}]
Let $n\geq 1$ be fixed and let $W,\gamma,a,|E\rangle$ be as in \Cref{thm:herald}. Let $r=100$ so that
\begin{equation}
b\equiv \sin(\pi/(4r+2))
\label{eq:br}
\end{equation}
satisfies $0.0075\leq b\leq |\gamma|$, where $\gamma$ is the constant from \Cref{thm:herald}. Let $q=b^2/|\gamma|^2$ and let $U$ be a single-qubit unitary such that 
\begin{equation}
U|0\rangle \approx_{2^{-2n}}\sqrt{q}|0\rangle+\sqrt{1-q}|1\rangle.
\end{equation}
Such a unitary can be implemented via the techniques of Ref.~\cite{ross2014optimal} with $O(n)$  gates from the set $\{H,T\}$. Now let $W'=W\otimes U$ so that 
\begin{equation}
I\otimes \langle 0^{a+1}|W'|0^{n+a+1}\rangle \approx_{\delta} b\cdot \frac{\gamma}{|\gamma|}|\mathrm{grad}_n\rangle.
\end{equation}
where $\delta\leq 2^{-2n+1}$. Next we use exactly $r=100$ rounds of amplitude amplification. If $\delta$ was zero this would exactly prepare the state we are interested in due to our choice~\cref{eq:br}. Because $\delta$ is nonzero, this results in a state which is $\epsilon=O(1)\cdot 2^{-2n+1}$-close to the desired state
\begin{equation}
\frac{\gamma}{|\gamma|}|\mathrm{grad}_n\rangle.
\end{equation}
We may take $n$ sufficiently large such that $\epsilon\leq 2^{-n}$ as claimed (the theorem holds trivially for $n$ below any fixed constant threshold value).
This achieves the claimed runtime because in each round of amplitude amplification we use $O(n)$ gates from the set $\calG$. 
\end{proof}

In the remainder of this section we prove \Cref{thm:herald}. As mentioned in the introduction, we write
\begin{equation}\label{eq:exp_LCU_decomposition}
e^{-\ii H}=\sum_{k=0}^{\infty} \frac{1}{k!}\sum_{\ell_1,\ell_2,\ldots \ell_k=1}^{n} \prod_{j=1}^{k} (-\ii\pi 2^{-\ell_j} Z_{\ell_j})
\end{equation}
where the RHS is a linear combination of unitaries, and then to use the LCU method \cite{ChildsWiebe12,BCCKS15} to implement it. We shall use the following binary encoding of tuples $(\ell_1,\ell_2\ldots, \ell_k)$.

Write $\mathbb{N}_+=\{1,2,3\ldots\}$ for the positive integers. 
For any $k\geq 1$ and $(\ell_1,\ell_2,\ldots, \ell_k)\in \mathbb{N}_{+}^{k}$ define 
\begin{equation}
|(\ell_1,\ell_2,\ldots, \ell_k)\rangle\equiv|0^{\ell_1-1}10^{\ell_2-1}1\ldots 0^{\ell_k-1}1\rangle.
\end{equation}
This is a computational basis state of $\sum_{j=1}^{k}\ell_j$ qubits. For $M\in \mathbb{N}_{+}$ and $1\leq k\leq M$, define
\begin{equation}
|\phi_{k,M}\rangle\equiv \sum_{\stackrel{\ell_1,\ell_2\ldots, \ell_k\in \mathbb{N}_+:}{\sum{\ell_j}\leq M}}\frac{1}{\sqrt{2^{\sum_{j}\ell_j}}} |(\ell_1,\ell_2,\ldots, \ell_k)\rangle |+^{M-\sum_{j} \ell_j}\rangle .
\end{equation}
We also define $|\phi_{0,M}\rangle=|+^M\rangle$.
Note that $|\phi_{k,M}\rangle$ is a uniform superposition of strings of Hamming weight at least $k$, and is subnormalized, i.e.  $\||\phi_{k,M}\rangle\|\leq 1$.  
For small values of $k$, these states are well approximated by the uniform superposition $|+^M\rangle$ of all computational basis states.
\begin{lemma}
For each $0\leq k\leq M/4$ we have $\big\||+^{M}\rangle-|\phi_{k,M}\rangle \big\|\leq 2^{-M/12}$. 
\label{lem:plus}
\end{lemma}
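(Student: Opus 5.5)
The approach is to identify $|\phi_{k,M}\rangle$ exactly as the restriction of $|+^M\rangle$ to a subset of computational basis strings. The difference $|+^M\rangle-|\phi_{k,M}\rangle$ is then supported on the complementary strings with amplitude $2^{-M/2}$ each, and its norm squared is $\Pr[\mathrm{Bin}(M,1/2)<k]$. A Chernoff/Hoeffding bound finishes the argument.

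\textbf{Step 1 (bijection).} Each tuple $(\ell_1,\ldots,\ell_k)$ with $\sum_j\ell_j\le M$ gives the prefix $0^{\ell_1-1}1\cdots0^{\ell_k-1}1$ of length $L=\sum_j\ell_j$. Expanding $|+^{M-L}\rangle$ in the computational basis gives each completion amplitude $2^{-(M-L)/2}$. Multiplying by $2^{-L/2}$, every resulting string $x\in\{0,1\}^M$ has amplitude $2^{-M/2}$.

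The map (tuple, suffix) $\mapsto x$ is injective, because the tuple is recovered from the positions of the first $k$ ones of $x$. Its image is exactly the set of strings with Hamming weight at least $k$, since any such string determines $\ell_j$ as the gaps between its first $k$ ones. So $|\phi_{k,M}\rangle=2^{-M/2}\sum_{|x|\ge k}|x\rangle$, matching the remark before the lemma. For $k=0$ the statement is trivial.

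\textbf{Step 2 (norm).} From Step 1,
\begin{equation}
\big\||+^M\rangle-|\phi_{k,M}\rangle\big\|^2=2^{-M}\,\#\{x:|x|<k\}=\Pr[\mathrm{Bin}(M,1/2)<k].
\end{equation}

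\textbf{Step 3 (tail bound).} With $k\le M/4$, Hoeffding's inequality gives
\begin{equation}
\Pr[\mathrm{Bin}(M,1/2)\le M/4]\le e^{-2M(1/4)^2}=e^{-M/8}.
\end{equation}
Taking square roots, the norm is at most $e^{-M/16}$. Since $\log_2 e/16\approx0.090>1/12\approx0.083$, we have $e^{-M/16}\le 2^{-M/12}$ for all $M\ge 1$.

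\textbf{Main obstacle.} The only real care is in Step 1: making sure the bijection is exact, with no double counting and no missing strings. It is exact because the encoding is a prefix code for the first $k$ ones. The constant check in Step 3 is routine.
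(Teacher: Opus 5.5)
Your proof is correct and follows the paper's route. Both identify $|+^M\rangle-|\phi_{k,M}\rangle$ as $2^{-M/2}\sum_{|x|<k}|x\rangle$, so the squared norm is $2^{-M}\sum_{j<k}\binom{M}{j}$. The paper takes this identification from the remark before the lemma, while you justify it explicitly through the prefix-code bijection. The only difference is the tail estimate: the paper uses the binary-entropy bound $2^{-M}\sum_{j<k}\binom{M}{j}\le 2^{M h(k/M)-M}$ with $h(1/4)<5/6$, whereas you use Hoeffding's $e^{-M/8}$. Hoeffding is slightly weaker but still below the required $2^{-M/6}$, as your constant check confirms.
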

\begin{proof}
For $k=0$ the two states are identical, while for $1\leq k\leq M/4$ we have 
\begin{equation}
\norm*{|+^{M}\rangle-|\phi_{k,M}\rangle}^2
=\norm[\bigg]{\frac{1}{\sqrt{2^{M}}}\sum_{z: |z|<k} |z\rangle}^2
=\frac{1}{2^M}\sum_{j=0}^{k-1} \binom{M}{j}\leq 2^{M\cdot h(k/M)-M}
\end{equation}
where $h(\cdot)$ is the binary entropy. Plugging in $h(k/M)\leq h(1/4)< 5/6$ completes the proof. 
\end{proof}

\begin{proof}[Proof of \Cref{thm:herald}]

Fix $n\geq1$. Let $M=128n$ and let
\begin{equation}
K=\left\lceil \frac{c n}{\log_2(n+2)}\right\rceil,
\qquad
\Lambda_K=\sum_{k=0}^{K}\frac{\pi^k}{k!},
\qquad
\alpha_k=\frac{\pi^k}{\Lambda_K k!},
\label{eq:kc}
\end{equation}
where $c$ is a sufficiently large absolute constant. Then, \cref{eq:exp_LCU_decomposition} can be expressed as
\begin{equation}
e^{-iH}= \Lambda_K\sum_{k=0}^{K}\alpha_k
\sum_{\ell_1,\ldots,\ell_k=1}^{n}
\prod_{j=1}^{k}\bigl(-i2^{-\ell_j}Z_{\ell_j}\bigr)
+ \sum_{k>K}\frac{(-iH)^k}{k!}.
\end{equation}
WLOG we assume that $n$ is sufficiently large such that 
\begin{equation}
K\leq M/4.
\end{equation}
The first step of our algorithm is to (approximately) prepare a state with coefficients $\sqrt{\alpha}_k$. The following lemma is proved in \Cref{sec:prepare}.

\begin{lemma}[PREPARE lemma]\label{lem:poisson-prepare}
Suppose $c$ in~\cref{eq:kc} is a sufficiently large constant. There is a unitary $\mathrm{PREP}_n$ using $O(n)$ gates from the set $\calG$ and $B=O(n)$  ancillas such that, for some normalized states $|g_k\rangle$,
\begin{equation}
\left\|
\mathrm{PREP}_n|0^B\rangle
-
\sum_{k=0}^{K}\sqrt{\alpha_k}|k\rangle|g_k\rangle
\right\|
\leq 2^{-4n}.
\label{eq:poisson-prepare}
\end{equation}

\end{lemma}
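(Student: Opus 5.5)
The plan is to prepare the truncated Poisson-like amplitudes $\sqrt{\alpha_k}$ with a \emph{unary} register $|k\rangle$, so that no arithmetic is needed. Write $|k\rangle$ as $|1^k0^{K-k}\rangle$ on $K+1=O(n/\log n)$ qubits; the $|g_k\rangle$ ancilla registers can then absorb any garbage. The amplitudes satisfy $\alpha_{k}/\alpha_{k-1}=\pi/k$. A unary state $\sum_k \sqrt{\alpha_k}|1^k0^{K-k}\rangle$ is therefore produced by a cascade of controlled single-qubit rotations. Qubit $1$ is rotated by $R_y(\beta_1)$, and for $j\ge 2$ qubit $j$ is rotated by $R_y(\beta_j)$ controlled on qubit $j-1$. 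The conditional probabilities are
\[
\cos^2(\beta_j/2)=\alpha_{j-1}\Big/\sum_{k\ge j-1}\alpha_k .
\]
Done exactly, this uses $K$ controlled rotations, and the resulting state is exact up to the choice of angles.

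The difficulty is that each rotation must be synthesized over $\calG$. A naive approach gives $K$ rotations each to precision $2^{-4n}/K$, costing $O(K\cdot n)$ gates, which is superlinear. The key observation is that the tail probabilities decay superexponentially: $\sum_{k\ge j}\alpha_k\le \pi^j/j!$. So qubit $j$ only needs to be prepared to precision roughly $2^{-4n}/\sqrt{\text{amplitude reaching it}}$. More precisely, the error contributed by a mistake in the $j$-th rotation is scaled by the amplitude $\sqrt{\sum_{k\ge j-1}\alpha_k}\le \sqrt{\pi^{j-1}/(j-1)!}$. The $j$-th rotation therefore only needs accuracy
\[
\epsilon_j = 2^{-4n-1}\cdot K^{-1}\cdot \sqrt{(j-1)!/\pi^{j-1}} .
\]
By Ross--Selinger this costs $O(\log(1/\epsilon_j))=O(n - j\log j + \log K)$ gates, plus $O(1)$ overhead to make it controlled (a controlled rotation is two CNOTs sandwiching half-angle rotations). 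More carefully, I would stop at the first index $j^*$ where $\pi^{j}/j!<2^{-8n}$. Beyond $j^*$ the amplitude is already negligible, so those qubits are simply left in $|0\rangle$. Since $j^*\log j^*=\Theta(n)$, we get $j^*=\Theta(n/\log n)$, and the constant $c$ in \cref{eq:kc} is chosen so that $K\ge j^*$. This is also why $c$ must be large, and truncating the tail is consistent with the $2^{-4n}$ target.

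The main obstacle is the gate count $\sum_{j\le j^*}O(\log(1/\epsilon_j))$. The bound $O(n)$ per term summed over $j^*=\Theta(n/\log n)$ terms is $O(n^2/\log n)$, which is too much. I would first try to use the fact that not every rotation needs full precision. For the early rotations, where $\pi^{j}/j!$ is still of order $1$, the required precision genuinely is $2^{-4n}$, and they cost $\Theta(n)$ each; that is fine only if there are $O(1)$ of them. This suggests a two-regime design. Handle the first $O(1)$ rotations, say up to $j_0=100$, with full-precision synthesis at cost $O(n)$. For the rest, avoid per-qubit synthesis entirely by exploiting that each step's target amplitude is approximately a fixed dyadic-friendly quantity. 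Concretely, prepare $\sum_k\sqrt{\alpha_k}|k\rangle$ via rejection-free inversion of a \emph{classical} reversible sampler: take $O(n)$ uniform bits in $|+\rangle$ states and compute $k$ by comparing against the binary expansions of the cumulative sums $\sum_{k'\le k}\alpha_{k'}$. These constants have $O(n)$ bits, but only $O(n/\log n)$ of them are relevant, and comparisons can be done in a unary-ordered, sequential fashion. A single sweep of a comparator across the $O(n)$-bit string is linear, with $k$ incremented in unary as the comparison thresholds are crossed. The garbage random bits then become $|g_k\rangle$, which is allowed since the statement only requires normalized $|g_k\rangle$.

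The remaining step I expect to be hardest is to arrange that one linear-size reversible comparator suffices for all $K$ thresholds simultaneously. Computing the thresholds' bits on the fly, rather than hardcoding $K$ separate $O(n)$-bit constants, is where I would focus. The plan is to use that the cumulative-sum thresholds share long common prefixes, since their differences $\alpha_k$ shrink superexponentially. Threshold $k$ thus differs from the limit only in its last roughly $n-\Theta(k\log k)$ bits. Hardcoding those differing suffixes gives total size $\sum_k O(n-k\log k)$, which I would then try to bound by $O(n)$ by reorganizing so that each suffix is charged only to the bit positions where it first deviates.
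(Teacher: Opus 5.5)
\textbf{There is a genuine gap, at exactly the step you flag as hardest.} You correctly diagnose why the controlled-rotation cascade is superlinear: the required precision is $\Omega(n)$ bits for every $j\le j^*/2$, not just for $O(1)$ early rotations. But neither fix you propose reaches $O(n)$ gates. The claim that the later steps have ``approximately a fixed dyadic-friendly'' target is false. The conditional probabilities $\alpha_{j-1}/\sum_{k\ge j-1}\alpha_k$ are distinct irrational-looking numbers, one for each $j$.

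The inverse-CDF sampler is correct as a sampler, but its comparator cannot be made linear by hardcoding suffixes. The threshold $F(k)=1-\sum_{k'>k}\alpha_{k'}$ starts with a run of about $\mu_k:=\log_2(1/\alpha_{k+1})=\Theta(k\log k)$ ones and is generic afterwards. Those generic bits are really needed. An error $\delta$ in $F(k)$ shifts $p_{k+1}$ by $\delta$ and costs $(\sqrt{p_{k+1}}-\sqrt{\alpha_{k+1}})^2\approx\delta^2/(4\alpha_{k+1})$. Reaching $\ell_2$ error $2^{-4n}$ therefore forces roughly $4n-\mu_k/2$ correct bits beyond the run of ones. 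Each of the $\Theta(n/\log n)$ indices with $\mu_k\le 4n$ thus carries at least $2n$ essential bits, and your own count $\sum_k O(n-k\log k)$ is of order $n^2/\log n$, not $O(n)$. Charging suffixes to the positions where they deviate changes nothing. All suffixes live on the same bit positions, so position $p$ is charged once for every $k$ with $\mu_k<p$, and the total is unchanged.

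\textbf{The missing idea} is to exploit the algebraic form $\alpha_k\propto\pi^k/k!$, so that only one $O(n)$-bit constant is ever needed, and to accept a heralded procedure. The paper proceeds in three steps.
\begin{enumerate}
\item It samples $k$ with probability $\eta_K/(3k!)$ using a sequential rejection sampler. The sampler draws $R_j$ uniformly from $\{0,\dots,j-1\}$; it continues if $R_j=0$, stops at $k=j$ if $R_j=1$, and fails otherwise. This compares only against small integers and costs $O(K\log K)=O(n)$ gates and random bits.
\item It adds four independent samples. By the multinomial identity this gives weights $4^k/k!$.
\item It multiplies the amplitude of $|k\rangle$ by $(\sqrt\pi/2)^k$ using the same gap-encoding LCU as in the main construction. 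Append $|+^L\rangle$ together with the $L=128n$ bits $d_\ell$ of $(1-\sqrt\pi/2)/2=\sum_\ell 2^{-\ell}d_\ell$. Apply SELECT with the $d$-register as data, producing phases $\prod_j(1-2d_{\ell_j})$. Then postselect the $L$-register on $|+^L\rangle$.
\end{enumerate}
The overall success probability is a constant, about $\eta_K^4\Lambda_K/81$, and it is known to $2^{-\Omega(n)}$ accuracy. So $O(1)$ rounds of exact amplitude amplification give the unitary $\mathrm{PREP}_n$ with $O(n)$ gates. This uses one single-qubit rotation synthesized to precision $2^{-O(n)}$ at cost $O(n)$.
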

Using the lemma we prepare the $B=O(n)$-qubit state
\begin{equation}
\mathrm{PREP}_n|0^B\rangle
\approx_{2^{-4n}}
\sum_{k=0}^{K}\sqrt{\alpha_k}|k\rangle|g_k\rangle.
\end{equation}
Next we append the state $|+^M\rangle$ and then use \Cref{lem:plus} to write
\begin{equation}
\mathrm{PREP}_n|0^B\rangle\otimes |+^M\rangle \approx_{2^{-4n+1}} \sum_{k=0}^{K}\sqrt{\alpha_k} |k\rangle|g_k\rangle|\phi_{k,M}\rangle.
\label{eq:sumLCU}
\end{equation}
Here we used the fact that $2^{-M/12}\leq 2^{-4n}$.
Then, we implement a diagonal unitary $D$ on the first register such that 
\begin{equation}
D|k\rangle=(-i)^k|k\rangle
\end{equation}
This operation only involves two bits of the first register; it can be implemented exactly using a constant number of gates in $\mathcal G$, giving
\begin{equation}
D\cdot\mathrm{PREP}_n|0^B\rangle\otimes |+^M\rangle \approx_{2^{-4n+1}} \sum_{k=0}^{K}(-i)^k\sqrt{\alpha_k} |k\rangle|g_k\rangle|\phi_{k,M}\rangle.
\label{eq:withi}
\end{equation}
Next we will need to implement a unitary $\mathrm{SELECT}$ that acts on $\lceil\log_2( K+1)\rceil+M+n$ qubits as
\begin{equation}
\mathrm{SELECT}|k\rangle |(\ell_1,\ell_2,\ldots, \ell_k)\rangle|+^{M-\sum \ell_j}\rangle |\phi\rangle=|k\rangle|(\ell_1,\ell_2,\ldots, \ell_k)\rangle|+^{M-\sum \ell_j}\rangle \prod_{j=1}^{k}P_{\ell_j}|\phi\rangle,
\label{eq:select}
\end{equation}
where $|\phi\rangle$ is any $n$-qubit state and 
\begin{equation}
P_\ell\equiv \begin{cases} Z_\ell & \ell \leq n\\
I & \text{ otherwise}.
\end{cases}
\label{eq:pl}
\end{equation}
We allow $\mathrm{SELECT}$ to do anything it likes if the input state is not of this form.
\begin{lemma}[SELECT lemma] Let  $K\leq M/4$ and $M=\Theta(n)$. A unitary $\mathrm{SELECT}$ acting as in \cref{eq:select} can be implemented exactly using $O(M)$ gates from $\mathcal{G}$ and ancillas.
\label{lem:select}
\end{lemma}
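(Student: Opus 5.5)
The plan is to reduce $\mathrm{SELECT}$ to a \emph{classical} reversible computation followed by a layer of controlled-$Z$ gates. On a valid input $|k\rangle|(\ell_1,\ldots,\ell_k)\rangle|+^{M-\sum\ell_j}\rangle$, the operator $\prod_{j=1}^k P_{\ell_j}$ equals $\prod_{i=1}^n Z_i^{c_i}$, where
\begin{equation}
c_i=\#\{j\leq k:\ell_j=i\}\bmod 2 .
\end{equation}
The padding register $|+^{M-\sum\ell_j}\rangle$ is a superposition of strings $x\in\{0,1\}^M$ that all share the same first $k$ gaps between ones. So $c$ is a deterministic function $f(k,x)$ of the classical input $(k,x)$: the gap lengths up to and including the $k$-th one. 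The implementation is then as follows.
\begin{enumerate}
\item Compute $c=f(k,x)$ into $n$ fresh ancillas.
\item Apply $\mathrm{CZ}$ from ancilla $i$ to data qubit $i$ for every $i\leq n$. This uses $O(n)$ gates from $\calG$, since $\mathrm{CZ}=(I\otimes H)\,\mathrm{CNOT}\,(I\otimes H)$.
\item Uncompute $c$.
\end{enumerate}
Because $c$ is a basis state that depends only on the control registers, which are left untouched, the ancillas return to $|0\rangle$ exactly. Hence the whole map is exact on valid inputs.

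It therefore suffices to show that $f$ has a classical Boolean circuit of size $O(M)$. Given this, standard Bennett-style reversible simulation (compute all intermediate wires with Toffolis into fresh ancillas, copy the output, run in reverse) gives a reversible circuit with $O(M)$ Toffoli/CNOT/X gates and $O(M)$ ancillas. Each Toffoli is exactly implementable with $O(1)$ gates from $\calG$, so we obtain $O(M)$ gates from $\calG$ and $O(M)$ ancillas.

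For the classical circuit I would proceed in three stages.
\begin{enumerate}
\item \textbf{Masking.} Zero out every bit of $x$ after the $k$-th one. I would compute prefix popcounts blockwise. Split $x$ into blocks of length $b=\Theta(\log M)$, compute each block's popcount in $O(b)$ gates, and compute prefix sums over the $M/b$ blocks with $O(\log M)$-bit adders, for $O(M)$ total. Within each block, compare against $k$ using the block prefix and an $O(b)$ in-block scan. Since $k\leq K=O(n/\log n)$, counters can also be capped at $O(\log K)$ bits.
\item \textbf{Conversion to binary gaps.} Turn the masked string into the list of binary gap lengths $\ell_1,\ldots,\ell_k$. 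The total bit length $\sum_j\lceil\log_2(\ell_j+1)\rceil$ is $O(M)$, since $\sum_j\ell_j\leq M$. This is again done blockwise, producing each $\ell_j$ in a fixed-width field of $O(\log M)$ bits. That uses $K\cdot O(\log M)=O(n)$ bits, using $K=\lceil cn/\log_2(n+2)\rceil$. Routing these fields into a compact array is a compaction problem solvable by a linear-size sorting-free network on blocks of size $\Theta(\log M)$.
\item \textbf{Parity histogram.} Given a list of $k\leq K$ indices in $[M]$, output the $n$-bit vector $c=\bigoplus_{j}e_{\ell_j}$, dropping indices $>n$ as in \cref{eq:pl}. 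This is a multiselection / multi-index decoding task, and it is exactly where I would invoke the linear-size classical circuits of Holmgren and Rothblum~\cite{HolmgrenRothblum24}. A list of $O(n/\log n)$ indices of $O(\log n)$ bits each has $O(n)$ input bits, so their result gives an $O(n)$-size circuit for $c$.
\end{enumerate}

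I expect the main obstacle to be the last two stages: achieving truly linear size, rather than $O(M\log M)$, when converting the unary-style encoding into indices and then computing the XOR of the $k$ unit vectors. A naive approach gives either $K\cdot n$ gates (decode each index separately) or $M\log M$ gates (full prefix counters). My first attempt would be to phrase stage 3 directly in the form handled by Holmgren–Rothblum, namely $n$ output positions and $K$ queries of $\log n$ bits, and to check that their bound is linear in the input length $K\log n=O(n)$. The constant $c$ in $K$ and the choice $M=128n$ are there precisely to make these budgets fit.
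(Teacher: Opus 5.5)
Your top-level reduction is sound: compute the parity vector $c$ from the control registers, apply a layer of $\mathrm{CZ}$'s, then uncompute. But the classical circuit for $c$, which is the whole content of the lemma, is not established, and stage~3 rests on a misreading of Holmgren--Rothblum. Their theorem computes $(x_{i_1},\ldots,x_{i_m})$ from an \emph{array} $x\in\{0,1\}^n$ and a list of indices. Your $c=\bigoplus_{j\le k}e_{\ell_j}$ is the transposed ``scatter'' problem: it has no array input and $n$ outputs. The theorem as quoted says nothing about it, and you give no reduction. Even the correct primitive, used in one shot, is too expensive. Its cost is $O(n+m\log^3 n)$, which is linear only for $m=O(n/\log^3 n)$, so with $m=K=\Theta(n/\log n)$ queries you get $\Theta(n\log^2 n)$, not a bound ``linear in the $K\log n$ input bits''. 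Stage~2 has a similar hole. A block of length $b$ can contain up to $b$ ones, so allotting fixed-width $O(\log M)$ fields per block gives $\Theta(M\log M)$ slots, and the linear-size compaction network is asserted rather than constructed. Finally, the lemma only assumes $K\le M/4$. For $K=\Theta(M)$, a list of $K$ fixed-width $\log M$-bit fields already has $\Theta(M\log M)$ bits, so your encoding cannot work in that regime at all.

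The paper avoids both obstacles with two ideas you are missing. First, it never materializes $c$. For a computational-basis data state, $\prod_j Z_{\ell_j}|x\rangle=(-1)^{\bigoplus_j x_{\ell_j}}|x\rangle$. So it runs multiselection on the \emph{data register itself}, XORs the selected bits into a parity ancilla, applies $Z$ there, and uncomputes (phase kickback). The needed classical primitive then really is multiselection. Second, instead of a compacted list it uses a fixed-position hierarchical encoding. Put the bits of $1s$ on the leaves of a complete binary tree with $2^R\ge M+1$ leaves. Assign gap $\ell_j$ to the least common ancestor of the $j$th and $(j+1)$st one-leaves. Each node then owns at most one gap, and a gap owned at height $h$ satisfies $\ell_j<2^h$, so it fits in an $h$-bit field. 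The total length is $\sum_h 2^{R-h}h=O(2^R)=O(M)$ for any number of gaps, with no compaction. Restricting to the first $k$ gaps is handled by a top-down ``budget'' $q_v$. All fields are computed by one bottom-up and one top-down pass in $O(M)$ gates. Multiselection is then run separately for each height. At height $h$ there are $2^{R-h}$ queries of $h$ bits into only the first $\min\{2^h,n\}$ bits of $x$, costing $O(2^h+2^{R-h}h^3)$, and these costs sum to $O(2^R)=O(M)$. This height-by-height splitting is exactly what keeps the Holmgren--Rothblum overhead linear. If you insist on computing $c$ explicitly, you can recover it by applying this phase oracle to an ancilla register in $|+^n\rangle$ and Hadamarding back, but you still need the tree encoding and the per-height splitting.
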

The proof of the Lemma is deferred to \Cref{sec:select}. Let us now adjoin an $n$-qubit data register initialized in $|+^n\rangle$ to our state in \cref{eq:withi} and then apply the $\mathrm{SELECT}$ operation (on all registers except the register holding $\ket{g_k}$) to get 
\begin{align}
&\mathrm{SELECT}\cdot  D\cdot \mathrm{PREP}_n|0^B\rangle|+^M\rangle|+^n\rangle \\
&\qquad \approx_{2^{-4n+1}}
\mathrm{SELECT}\sum_{k=0}^{K}(-i)^k\sqrt{\alpha_k} |k\rangle|g_k\rangle|\phi_{k,M}\rangle|+^n\rangle\\
&\qquad =\sum_{k=0}^{K}\sqrt{\alpha_k} |k\rangle|g_k\rangle
\sum_{\stackrel{\ell_1,\ell_2\ldots, \ell_k\in \mathbb{N}_+:}{\sum{\ell_j}\leq M}}
\frac{|(\ell_1,\ell_2,\ldots, \ell_k)\rangle |+^{M-\sum_{j} \ell_j}\rangle}{\sqrt{2^{\sum_{j}\ell_j}}}
\prod_{j=1}^{k}(-i P_{\ell_j})|+^n\rangle.
\end{align}
Define
\begin{equation}
\widetilde H_0:=\sum_{\ell=1}^{\infty}2^{-\ell}P_\ell
=\sum_{\ell=1}^{n}2^{-\ell}Z_\ell+2^{-n}I
=\frac{H}{\pi}+2^{-n}I.
\end{equation}
Next apply $\mathrm{PREP}_n^\dagger$ to the registers holding $\ket{k}\!\ket{g_k}$ and apply Hadamards to the $M$-qubit register. If we then project all qubits onto $|0\rangle$ except the final $n$-qubit register, we get
\begin{align}
(\langle 0^{B+M}|\otimes I_n)(I\otimes H^{\otimes M}\otimes I)\mathrm{PREP}^{\dagger}_n\cdot &\mathrm{SELECT}\cdot D\cdot \mathrm{PREP}_n|0^B\rangle|+^M\rangle|+^n\rangle\label{eq:prepselect}\\
&\approx_{2^{-4n+2}} \frac1{\Lambda_K}\sum_{k=0}^{K} \frac{\pi^k}{k!}
\sum_{\stackrel{\ell_1,\ldots,\ell_k\in\mathbb N_+:}{\sum_j\ell_j\leq M}}
\prod_{j=1}^{k}(-i2^{-\ell_j}P_{\ell_j})|+^n\rangle \nonumber \\
&\approx_{2^{-4n+3}}
\frac1{\Lambda_K}\sum_{k=0}^{K}\frac{(-i\pi\widetilde H_0)^k}{k!}|+^n\rangle, 
\label{eq:almost}
\end{align}
where in the last line we used that, for every $k\leq K\leq M/4$,
\begin{equation}
\sum_{\stackrel{\ell_1,\ldots,\ell_k\in\mathbb N_+:}{\sum_j\ell_j>M}}
2^{-(\ell_1+\cdots+\ell_k)}
=\frac1{2^M}\sum_{r=0}^{k-1} \binom{M}{r}
\leq2^{-M/6}\leq 2^{-4n}.
\end{equation}
Moreover, by choosing the constant $c$ in the definition of $K$ sufficiently large, we can ensure that $
\sum_{k>K}\frac{\pi^k}{k!}\leq2^{-4n}$ and therefore \cref{eq:almost} is $2^{-4n+4}$-close to
\begin{equation}
\frac1{\Lambda_K}e^{-i\pi\widetilde H_0}|+^n\rangle
= \gamma|\mathrm{grad}_n\rangle,
\end{equation}
where 
\begin{equation}\label{eq:gamma_n}
    \gamma=
\frac{e^{-i\pi2^{-n}}}{\Lambda_K} \quad  \text{and recall from \cref{eq:kc} that }\Lambda_K=\sum_{k=0}^{K}\frac{\pi^k}{k!} . 
\end{equation}
This means that~\cref{eq:prepselect} is $2^{-4n+4}\leq 2^{-2n}$-close to the above unnormalized state.

Note that 
since $1\leq \Lambda_K\leq e^\pi$, we have $|\gamma|\geq e^{-\pi}> 0.01$ as claimed.  Taking 
\begin{equation}
W=(I\otimes H^{\otimes M}\otimes I)\mathrm{PREP}_n^{\dagger}\cdot \mathrm{SELECT}\cdot D\cdot \mathrm{PREP}_n(I\otimes H^{\otimes (M+n)}),
\end{equation}
$\gamma$ as described above, and $a=M+B=O(n)$ completes the proof.
\end{proof}

\section*{Acknowledgments}
Various versions of ChatGPT, Claude, and Gemini were used to develop proof ideas, assist with proving theorems, and proofread drafts. In particular, LLMs played a significant role in the proof of \Cref{thm:optimalphase}. 

\bibliographystyle{alphaurl}
\bibliography{paper1}

\appendix

\section{Optimality}

\begin{proposition}\label{prop:optimal}
    Let $f$ be a super-polynomial function (i.e., $f(n) = n^{\omega(1)}$). It is not possible to prepare every $n$-qubit product state to trace distance $1/f(n)$ using an adaptive circuit with $O(n)$ gates over $\calG$.
\end{proposition}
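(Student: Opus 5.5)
We argue by counting: there are too few adaptive circuits of size $O(n)$ to approximate every product state to super-polynomially small error. The plan is to upper bound the number of distinct adaptive circuits with at most $cn$ gates over $\calG$, and to lower bound the size of a packing of $n$-qubit product states whose elements are pairwise far apart in trace distance. Each circuit prepares one fixed output state, averaged over measurement outcomes. So if every product state in the packing had a $1/f(n)$-approximation, two packing elements would have to share a circuit. The triangle inequality would then force those two elements to be within $2/f(n)$ of each other, which is a contradiction.

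\emph{Step 1 (counting circuits).} An adaptive circuit with at most $T=cn$ gates acts on at most $O(T)$ relevant qubits, since ancillas that no gate touches can be dropped. It is described by a sequence of operations, each of which is either a gate from $\calG$ on specified qubits or a measurement. Each later operation may be classically controlled on earlier measurement outcomes. We can encode the whole protocol as a decision tree of depth $T$, where each node specifies an operation from a set of size $\mathrm{poly}(T)$. This tree has up to $2^T$ nodes, which gives a crude bound of $T^{O(2^T)}$, and that is far too weak.

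The fix is to use the deferred measurement principle. Each measurement is replaced by a CNOT onto a fresh ancilla, and each classically controlled gate by a quantum-controlled gate. For this, the control logic must be counted as part of the gate budget. This is the natural convention, since the paper's circuits count classical feedforward at $O(1)$ per controlled gate. Alternatively, we can count only ``classically controlled single gates'': each operation is a gate together with a single control bit. Either way, the encoding uses $O(T\log T)$ bits, so there are at most $2^{O(T\log T)}=2^{O(n\log n)}$ circuits.

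This step is the main obstacle. One must fix a model of adaptivity under which the description length stays $O(n\log n)$. If the model allows an arbitrary classical function of the outcomes to choose the next gate, we instead bound the output states through the deferred-measurement unitary, which uses $O(n)$ gates from a finite gate set after accounting for the classical control. I would state this convention explicitly.

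\emph{Step 2 (packing).} Consider the product states $\bigotimes_{i=1}^n R(\theta_i)\ket{+}$ with each $\theta_i$ a multiple of $2\pi/N$. Two such states that differ in some coordinate $i$ by at least $2\pi/N$ have fidelity at most $|\cos(\pi/N)|$, because the fidelity of a product state is the product of the single-qubit overlaps. Their trace distance is therefore $\Omega(1/N)$. This yields $N^n$ states with pairwise trace distance $\Omega(1/N)$.

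\emph{Step 3 (conclusion).} Choose $N=f(n)/C$ for a suitable constant $C$, so that the pairwise distance $\Omega(1/N)$ exceeds $2/f(n)$. The number of states to be approximated is then $N^n=2^{n\log(f(n)/C)}=2^{\omega(n\log n)}$, because $f$ is super-polynomial. This exceeds the number of circuits, $2^{O(n\log n)}$, for large $n$, which gives the contradiction.
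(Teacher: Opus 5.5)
Your Step~1 has a genuine gap. The proposition concerns the paper's model of adaptive circuits. There only gates from $\calG$ are counted, and the next gate, including which qubits it acts on, may be an arbitrary classical function of all earlier measurement outcomes. In that model you have no $2^{O(n\log n)}$ bound on the number of distinct protocols. As you observe yourself, the decision tree can have $2^{\Theta(n)}$ nodes, so the naive count is doubly exponential and the pigeonhole over circuits fails.

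Neither of your fixes repairs this. Charging the control logic to the gate budget, or allowing only one classical control bit per gate, proves the lower bound only for a restricted subclass of adaptive circuits. A lower bound for a subclass says nothing about the general class. The paper also does not charge classical feedforward: its own protocols rely on free classical bookkeeping (e.g.\ bank occupancy in \Cref{alg:recursive_rotation}), and the optimality claim is meant for that model. Your assertion that the deferred-measurement unitary uses $O(n)$ gates is likewise unsupported. Coherently evaluating an arbitrary Boolean function of up to $O(n)$ outcome bits, and coherently routing gates to outcome-dependent qubits, can cost far more than $O(n)$ gates.

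The missing idea is to count \emph{branches} instead of circuits. For a fixed measurement record $s$, the unnormalized branch state is $M_s|0^N\rangle$. Here $M_s$ is a straight-line product of $O(n)$ operators from $\calG\cup\{\proj{0},\proj{1}\}$ on $N=O(n)$ qubits. There are only $2^{O(n\log n)}$ such $M_s$, however complicated the classical control is.

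The output is the mixture $\sum_s p_s\proj{\phi_s}$. Because the target $|\psi\rangle$ is pure, trace distance at most $\epsilon$ gives $\sum_s p_s\langle\phi_s|(\proj{\psi}\otimes I)|\phi_s\rangle\geq 1-\epsilon$. So some single branch has overlap at least $1-\epsilon$, and by Fuchs--van de Graaf its reduced state is within $\sqrt{\epsilon}$ of the target.

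Your Steps~2 and~3 then go through if you choose the grid spacing so that targets are pairwise more than $2\sqrt{\epsilon}$ apart. This still yields $(1/\sqrt{\epsilon})^{\Omega(n)}=f(n)^{\Omega(n)}=2^{\omega(n\log n)}$ targets, each requiring its own $M_s$, which is the contradiction. Passing to branches costs only this square-root loss in the error, which is harmless because $\sqrt{f}$ is still super-polynomial.
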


\begin{proof}

We use a counting argument along the lines of the proof of~\cite[Claim~4.5]{GKW26}.

Set $\epsilon=1/f(n)$ and consider the product states $\ket{\psi_{\boldsymbol{\theta}}} =\bigotimes_{j=1}^n R(\theta_j)\ket{+}$, where each $\theta_j$ is chosen from a grid in $[0,1]$ with spacing $20\sqrt{\epsilon}$. For sufficiently large $n$, this gives $f(n)^{\Omega(n)}$ states with pairwise trace distance greater than $2\sqrt{\epsilon}$.

Suppose each of these states has an adaptive $\epsilon$-approximation with worst-case gate count $O(n)$. The adaptive circuit that approximately prepares  a given target state $\ket{\psi_{\boldsymbol{\theta}}}$  implements a quantum channel on $N=O(n)$ qubits
\begin{equation}
\mathcal{E}_{\boldsymbol{\theta}}(\proj{0^N})=\sum_{s} p_s 
\proj{\phi_s}\qquad
\ket{\phi_s}=\frac{1}{\sqrt{p_s}} M_s \ket{0^N},
\end{equation}
where the sum is over measurement outcomes $s$ that appear in the course of the adaptive circuit, $p_s=\|M_s|0^N\rangle\|^2$ is the probability of measurement outcomes $s$, and  $M_s$ is a sequence of $O(n)$ operators from $\mathcal{G}\cup \{\proj{0},\proj{1}\}$. Here we have measured any discarded ancillas to ensure that the state corresponding to each outcome $s$ is pure.

Since the adaptive circuit $\epsilon$-approximately prepares the given state in trace distance, we have
\begin{equation}
\mathrm{Tr}\left(\mathcal{E}_{\boldsymbol{\theta}}(\proj{0^N})  \cdot \proj{\psi_{\boldsymbol{\theta}}}\otimes I^{N-n}\right)\geq 1-\epsilon.
\end{equation}
This in turn implies that there exists some sequence of measurement outcomes $s=s(\boldsymbol{\theta})$ such that
\begin{equation}
\langle \phi_s|\cdot (\proj{\psi_{\boldsymbol{\theta}}}\otimes I^{N-n}) \cdot |\phi_s\rangle\geq 1-\epsilon,
\label{eq:ms}
\end{equation}
which implies (via Fuchs-van de Graaf) that the trace distance between $|\psi_{\boldsymbol{\theta}}\rangle$ and the corresponding reduced state of $|\phi_s\rangle$ is at most $\sqrt{\epsilon}$. 
Since the states $\{|\psi_{\boldsymbol{\theta}}\rangle\}$ are pairwise separated by greater than $2\sqrt{\epsilon}$ in trace distance, a fixed $|\phi_s\rangle$  can be within $\sqrt{\epsilon}$ of at most one target $|\psi_{\boldsymbol{\theta}}\rangle$. From this we infer that there must exist $f(n)^{\Omega(n)}$ different operators $M_s$, but this is a contradiction: there are at most $2^{O(n\log(n))}\ll f(n)^{\Omega(n)} $ possible operators $M_s$ that are built as products of $O(n)$ elements from $\mathcal{G}\cup \{\proj{0},\proj{1}\}$.
\end{proof}

The same bound holds for expected gate count: by Markov's inequality, branches costing at most twice the expectation have total probability at least $1/2$. Their conditional average infidelity is at most $2\epsilon$, from which we infer a branch $|\phi_s\rangle$  with trace distance at most $\sqrt{2\epsilon}$ and we can use an almost identical counting argument.

\begin{proposition}\label{prop:dyadic}
    For every sufficiently small $\eps>0$ there is a dyadic gate  $R(\pi/2^j)$ for some positive integer $j$, such that approximating it to error $\eps$ requires $\Omega(\log(1/\eps))$ $T$ gates even with unlimited ancilla, unlimited Cliffords and adaptive circuits.
\end{proposition}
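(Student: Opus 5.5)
The plan is to reduce to the lower bound of Beverland et al.~\cite{beverland2020lower}. That result states that for a generic single-qubit rotation $R(\theta)$, any $\epsilon$-approximation needs $\Omega(\log(1/\epsilon))$ $T$ gates, even with unlimited ancillas, free Cliffords and adaptivity. Their argument is really about how well states or channels produced by Clifford+$T$ adaptive circuits with $t$ $T$ gates can approximate a target, so it has two ingredients. First, a counting or dimension argument: with $t$ $T$ gates, the set of achievable single-qubit channels (up to Clifford equivalence) is contained in a family whose ``resolution'' is $2^{-O(t)}$. Second, a specific target that is far from that family.

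I would make the reduction concrete through the following observation. Consider the diagonal rotations achievable exactly with $t$ $T$ gates and stabilizer operations. With postselection or adaptivity via gate teleportation, these all lie in the cyclotomic ring $\mathbb{Z}[\tfrac12, e^{i\pi/4}]$, with denominator exponent $O(t)$. The angle $\pi/2^j$ for large $j$ must be approximated by elements of this ring, and $e^{i\pi/2^j}$ is $2^{-j}$-close to $1$ but not equal to it.

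The main step is to pick $j \approx \log_2(1/\epsilon)$, say $j = \lceil \log_2(1/\epsilon)\rceil - c$, so that $R(\pi/2^j)$ is at distance $\Theta(\epsilon)$ from the identity but still at distance $\gg \epsilon$ from every Clifford. One then needs to show that any channel produced with $t = o(j)$ $T$ gates that is $\epsilon$-close to $R(\pi/2^j)$ yields a contradiction. The approach I would use is the stabilizer-extent (or magic monotone) argument from \cite{beverland2020lower}. Their monotone $M$ satisfies two properties: $M$ is at most $O(t)$ for outputs of circuits with $t$ $T$ gates, and $M(\rho) \geq \Omega(\log(1/\epsilon))$ for any $\rho$ that is $\epsilon$-close to the magic state $R(\theta)|+\rangle$, where $\theta$ is at distance $\geq \epsilon^{c'}$ from multiples of $\pi/4$ while remaining bounded away from the Clifford angles relative to $\epsilon$. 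Applying the gate to $|+\rangle$ reduces channel approximation to state approximation.

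The obstacle is verifying that their lower bound applies to the specific small dyadic angle. Their theorem is typically stated for angles with a quantitative distance to the Clifford angles. For $\theta = \pi/2^j$ that distance is about $2^{-j}$, which is about $\epsilon$, so the lower bound could degenerate. To handle this, I would choose $j$ so that $2^{-j} = \epsilon^{1/2}$, i.e.\ $j \approx \tfrac12\log_2(1/\epsilon)$. Then $\theta$ is $\sqrt{\epsilon}$-far from Clifford angles, which is polynomially larger than $\epsilon$, and Beverland et al.'s bound, which has the form $\Omega(\log(\mathrm{dist}/\epsilon))$ or $\Omega(\log(1/\epsilon))$ in that regime, still gives $\Omega(\log(1/\epsilon))$. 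If necessary, I would instead fall back on a counting argument over the finitely many dyadic angles $\pi/2^j$ with $j \le \log_2(1/\epsilon)$. These gates are pairwise separated by more than $2\epsilon$, and an adaptive circuit with $t$ $T$ gates, modulo Cliffords, realizes only $2^{O(t)}$ distinct diagonal channels. Hence some $j$ requires $t = \Omega(\log\log(1/\epsilon))$. That is weaker, which is why the monotone route is my primary plan.
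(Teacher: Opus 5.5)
\textbf{There is a genuine gap in the key step.} Your overall route is the one the paper takes: reduce to the Beverland et al.\ synthesis lower bound, and choose $j$ so that the dyadic angle is polynomially far from the identity. The gap is that you never check the hypothesis of that bound for a concrete dyadic angle, and with your choice of $j$ it fails.

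The bound is not of the form $\Omega(\log(\mathrm{dist}/\epsilon))$. Theorem~5.2 of~\cite{beverland2020lower}, as the paper uses it with $C=2$, says: if $0<\delta<1/512$ and $2\sqrt{2\delta}\le|\langle 0|U|1\rangle|^2\le 6\sqrt{2\delta}$, then approximating $U$ to diamond error $\delta$ needs $\Omega(\log(1/\delta))$ $T$ gates. Since $R(\theta)$ is diagonal, you must first conjugate by the free Clifford $H$, which gives $|\langle 0|HR(\theta)H|1\rangle|^2=\sin^2(\theta/2)$.

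To conclude something about $\epsilon$-approximations with an $\Omega(\log(1/\epsilon))$ bound, you need $\epsilon\lesssim\delta\le\epsilon^{\Omega(1)}$. The window then forces $\sin^2(\theta/2)=\Theta(\sqrt{\delta})$, i.e.\ $\theta\approx\epsilon^{1/4}$ and $j\approx\frac14\log_2(1/\epsilon)$.

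Your choice $2^{-j}\approx\epsilon^{1/2}$ gives $\sin^2(\theta/2)=\Theta(\epsilon)\ll\sqrt{\epsilon}$, which is below the window. The theorem then applies only with $\delta=\Theta(\epsilon^2)$, and a lower bound against $\epsilon^2$-approximations says nothing about $\epsilon$-approximations. Nothing you cite covers that regime.

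The paper instead proceeds as follows:
\begin{enumerate}
\item It sets $\delta_j=\sin^4(\theta_j/2)/32$, so that $\sin^2(\theta_j/2)=4\sqrt{2\delta_j}$ lies inside the window.
\item It notes $\delta_j/\delta_{j+1}=16\cos^4(\theta_j/4)\le 16$. Hence for every small $\epsilon$ some integer $j$ satisfies $2\epsilon\le\delta_j<32\epsilon$. Your proposal also omits this discretization step.
\item It concludes $\Omega(\log(1/\delta_j))=\Omega(\log(1/\epsilon))$.
\end{enumerate}

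\textbf{The monotone mechanism is not available.} The stabilizer extent of every single-qubit state is at most $2$ (expand it in the computational basis). So neither it nor a monotone of that type can certify $M(\rho)=\Omega(\log(1/\epsilon))$ for states near $R(\theta)|+\rangle$. The Beverland et al.\ synthesis bound does not come from such a monotone. It comes from their canonical form for postselected stabilizer computations together with an algebraic argument about the approximating operators, closer in spirit to your cyclotomic-ring remark. That is why it carries the specific window hypothesis above.

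Your counting fallback yields only $\Omega(\log\log(1/\epsilon))$, as you note, so it does not prove the proposition either.
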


\begin{proof}
By~\cite[Theorem~5.2]{beverland2020lower} with $C=2$, if $0<\delta<1/512$ and a single-qubit unitary $U$ satisfies
\begin{equation}
2\sqrt{2\delta}\leq |\langle0|U|1\rangle|^2\leq 6\sqrt{2\delta},
\label{eq:bchk-condition}
\end{equation}
then approximating $U$ to diamond-norm error $\delta$ requires $\Omega(\log(1/\delta))$ $T$ gates in expectation, even with arbitrary stabilizer ancillas, Clifford gates, and adaptive measurements.

To apply this bound to dyadic rotations, let $\theta_j=\pi/2^j$, $U_j=HR(\theta_j)H$, and $\delta_j=\sin^4(\theta_j/2)/32$. Then
\begin{equation}
|\langle0|U_j|1\rangle|^2
=\left|\frac{1-e^{i\theta_j}}{2}\right|^2
=\sin^2(\theta_j/2)
=4\sqrt{2\delta_j},
\end{equation}
so \cref{eq:bchk-condition} holds. Moreover, $\delta_j$ decreases to zero as $j\rightarrow \infty$ and
\begin{equation}
\frac{\delta_j}{\delta_{j+1}}=16\cos^4(\theta_j/4)\leq16.
\end{equation}
Thus, for every sufficiently small $\eps>0$, we can choose $j$ with $2\eps\leq\delta_j<32\eps$.

An implementation of $R(\theta_j)$ to diamond distance $\eps$, conjugated by free $H$ gates, approximates $U_j$ to diamond-norm error at most $2\eps\leq\delta_j$ with the same $T$-count. The lower bound is therefore $\Omega(\log(1/\delta_j))=\Omega(\log(1/\eps))$, as claimed.
\end{proof}

\section{Phase-halving gadget\label{sec:phasehalving}}

\begin{proposition}\label{prop:phasehalving}
For $\theta\in\mathbb{R}$ and single-qubit states $|\psi\rangle$ and $|\phi\rangle$, the phase-halving circuit on the right of \Cref{fig:rotation-resource-conversions} implements
\begin{equation}
R(\theta)|+\rangle\otimes|\psi\rangle\otimes|\phi\rangle\otimes|0\rangle 
\longmapsto R(\theta)|+\rangle\otimes R(\theta)|\psi\rangle\otimes R(\theta)|\phi\rangle\otimes|0\rangle.
\end{equation}
\end{proposition}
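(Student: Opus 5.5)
The plan is to verify the identity directly on computational basis states. Every gate in the gadget except $R(2\theta)$ is a classical reversible gate (CNOT or Toffoli), and $R(2\theta)$ is diagonal. So the circuit maps $|c,x,y,0\rangle$ to a single basis state times a phase. Write the catalyst as $R(\theta)|+\rangle=\tfrac{1}{\sqrt2}\sum_{c\in\{0,1\}}e^{i\theta c}|c\rangle$ and take $|\psi\rangle=\sum_x\psi_x|x\rangle$, $|\phi\rangle=\sum_y\phi_y|y\rangle$. By linearity it suffices to track a basis input $|c\rangle|x\rangle|y\rangle|0\rangle$ together with its amplitude $e^{i\theta c}$.

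I would follow the wires gate by gate, reading the circuit left to right.
\begin{itemize}
\item The two CNOTs targeting the catalyst set the first wire to $c':=c\oplus x\oplus y$.
\item The two CNOTs controlled on that wire turn the data wires into $x\oplus c'$ and $y\oplus c'$.
\item The Toffoli followed by a CNOT from the first wire writes $f:=\bigl((x\oplus c')(y\oplus c')\bigr)\oplus c'$ into the ancilla.
\item $R(2\theta)$ then contributes the phase $e^{2i\theta f}$.
\end{itemize}
The last four gates are the mirror image of the compute stage, which gives two facts. First, the ancilla is returned to $|0\rangle$. Second, the data wires are restored to $x$ and $y$, because the first wire still holds $c'$. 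The net map is therefore
\begin{equation}
e^{i\theta c}|c,x,y,0\rangle\longmapsto e^{i\theta(c+2f)}|c',x,y,0\rangle .
\end{equation}

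The key step, and the only nontrivial one, is the exact integer identity
\begin{equation}
c+2f=c'+x+y\qquad\text{for all } c',x,y\in\{0,1\},\ c=c'\oplus x\oplus y .
\end{equation}
It must hold over the integers, not merely modulo $2$, since $\theta$ is arbitrary. I would check it by splitting into two cases.
\begin{itemize}
\item If $c'=0$, then $c=x\oplus y$ and $f=xy$. Hence $c+2f=x\oplus y+2xy=x+y$.
\item If $c'=1$, then $c=1\oplus x\oplus y$ and $f=1\oplus(1-x)(1-y)=x\vee y$. Checking the three cases $x+y\in\{0,1,2\}$ gives $1,2,3$ respectively, which equals $1+x+y$.
\end{itemize}

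Given this identity, the output amplitude factorizes as $e^{i\theta c'}e^{i\theta x}e^{i\theta y}$. Because $c\mapsto c'$ is a bijection for fixed $x,y$, summing over $c$ is the same as summing over $c'$. Resumming then yields $R(\theta)|+\rangle\otimes R(\theta)|\psi\rangle\otimes R(\theta)|\phi\rangle\otimes|0\rangle$, as claimed.

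The $T$-count remark used in \Cref{remark:tcount} would be handled separately, for example via a measurement-based uncomputation of the Toffoli pair. It is not needed for the proposition itself.
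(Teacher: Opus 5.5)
Your proof is correct and is essentially the paper's argument. Both trace computational-basis inputs through the CNOT/Toffoli network, use the mirror symmetry of the last four gates to restore the data wires and the ancilla, and read off the phase contributed by $R(2\theta)$.

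The only difference is bookkeeping. You also expand the catalyst in the computational basis and verify the integer identity $c+2f=c'+x+y$ by casing on $c'$. The paper instead cases on $(x,y)$ and keeps the catalyst as a state, using $R(2\theta)XR(\theta)|+\rangle=e^{\ii\theta}R(\theta)|+\rangle$ when exactly one of $x,y$ is $1$.
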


\begin{proof}
By linearity, it suffices to take $|\psi\rangle=|x\rangle$ and $|\phi\rangle=|y\rangle$ for bits $x,y\in\{0,1\}$. The four gates following $R(2\theta)$ undo the four immediately preceding it, restoring the data bits to $x,y$ and the ancilla to $|0\rangle$. We therefore track the catalyst and the phase acquired in each of three cases. On each computational-basis component, let $c$ denote the catalyst bit after the first two CNOTs.

\emph{Case $(x,y)=(0,0)$.} The first two CNOTs leave the catalyst unchanged. The next two CNOTs make both data bits equal to $c$, so the Toffoli writes $c$ into the ancilla and the subsequent CNOT changes it to $c\oplus c=0$. Thus $R(2\theta)$ contributes no phase.

\emph{Case $(x,y)=(0,1)$ or $(1,0)$.} Since exactly one of $x,y$ is $1$, the combined action of the first two CNOTs on the catalyst is $X$. The next two CNOTs change the data bits to $x\oplus c$ and $y\oplus c$. Since $x\ne y$, exactly one of these bits is $1$, so the Toffoli does nothing and the subsequent CNOT copies $c$ into the ancilla. Applying $R(2\theta)$ to the ancilla therefore has the same effect as applying it to the catalyst. The resulting catalyst state is
\begin{equation}
R(2\theta)X R(\theta)|+\rangle
=\frac{e^{\ii\theta}|0\rangle+e^{2\ii\theta}|1\rangle}{\sqrt2}
=e^{\ii\theta}R(\theta)|+\rangle.
\end{equation}
Thus the catalyst is restored and this input acquires phase $e^{\ii\theta}$.

\emph{Case $(x,y)=(1,1)$.} The first two CNOTs each apply $X$ to the catalyst, which cancel. The next two CNOTs make both data bits equal to $c\oplus 1$, so the Toffoli writes $c\oplus 1$ into the ancilla and the subsequent CNOT changes it to $(c\oplus 1)\oplus c=1$. Thus $R(2\theta)$ contributes phase $e^{2\ii\theta}$.

In all three cases, the catalyst is returned unchanged and the data input acquires phase $e^{\ii\theta(x+y)}$, which is exactly the action of $R(\theta)\otimes R(\theta)$.
\end{proof}

With free Clifford gates and adaptivity, the phase-halving gadget can be implemented using four $T$ gates in addition to one $R(2\theta)$ gate. The first Toffoli computes the AND of its two controls into a $|0\rangle$ ancilla, which costs four $T$ gates using the temporary logical-AND construction~\cite[Fig.~3]{gidney2018halving}. The two CNOTs surrounding $R(2\theta)$ restore this AND value, while the controls are unchanged. Thus the second Toffoli can be replaced by measuring the ancilla in the $X$ basis, applying $\mathrm{CZ}$ to the two controls if the outcome is $-1$, and resetting the ancilla to $|0\rangle$. This uncomputation uses only Clifford gates and a measurement, so it requires no additional $T$ gates.

\section{PREPARE and SELECT Lemmas}

\subsection{SELECT lemma\label{sec:select}}
\begin{proof}[Proof of \Cref{lem:select}]
Let $n$ be a positive integer, $M=\Theta(n)$, and let $s\in \{0,1\}^M$ and $x\in \{0,1\}^n$ be binary strings. Letting $t=|s|$ be the Hamming weight of $s$, we can write
\begin{equation}
s=0^{\ell_1-1}10^{\ell_2-1}1\ldots 0^{\ell_t-1}10^r
\label{eq:s}
\end{equation}
for some $0\leq t\leq M$, nonnegative integer $r$, and positive integers $\ell_1,\ell_2,\ldots \ell_t$ such that $\sum_{j=1}^{t} \ell_j\leq M$.

In the following we implement an operation $\mathrm{SELECT}$ such that
\begin{equation}
\mathrm{SELECT}|k\rangle|s\rangle|x\rangle=|k\rangle|s\rangle\prod_{j=1}^{\min\{t,k\}}P_{\ell_j}|x\rangle.
\label{eq:implement}
\end{equation}
where $s$ is given by \cref{eq:s} and $P_{\ell}$ is given by \cref{eq:pl}.
This is sufficient to implement the desired functionality \cref{eq:select}. 

The key ingredients in our implementation are (a) computing a tree-like data structure from $s$ which hierarchically organizes the information about the ``gaps'' $\ell_j$  according to their size, and (b) fast classical circuits for \textit{multiselection} due to Holmgren and Rothblum \cite{HolmgrenRothblum24}:

\begin{theorem}[{\cite{HolmgrenRothblum24}}]\label{thm:hr}
There is a bounded fan-in Boolean
circuit for the multiselection map
\begin{equation}
\operatorname{Sel}_{n\to m}:\B^n\times [n]^m\to \B^m,\qquad
(x,i_1,\ldots,i_m)\mapsto (x_{i_1},\ldots,x_{i_m}),
\end{equation}
of size $O(n+m\log^3 n)$ and depth $O(\log(n+m))$.  
\end{theorem}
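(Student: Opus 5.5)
This is a black-box import: the paper cites the statement from Holmgren and Rothblum and uses it without proof. The following is my own sketch of how a circuit of size $O(n+m\log^3 n)$ could be built. It is not necessarily their construction.

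The plan is a two-level data-movement argument: blocking, then compaction, then sorting. Sorting networks alone do not suffice when $m\ll n$. Sorting all $n$ data bits together with the $m$ queries by AKS costs $O((n+m)\log(n+m))$, which already exceeds the budget by a $\log n$ factor once $m=o(n/\log^2 n)$.

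\textbf{Blocking and marking.} Partition $x$ into $N=n/b$ blocks of $b=\Theta(\log^2 n)$ bits, and split each index as $i_j=(\beta_j,o_j)$ with a block number and an offset. Mark the blocks that are requested by at least one query.
\begin{itemize}
\item Form the $N+m$ records consisting of the block numbers $1,\dots,N$ and the query block numbers $\beta_1,\dots,\beta_m$, each with a tag.
\item Sort them by block number with an AKS network, then do one local pass that propagates a ``requested'' flag onto each block record.
\item Each record has $O(\log n)$ bits, so this costs $O((N+m)\log n\cdot\log n)=O(n+m\log^2 n)$, with depth $O(\log(n+m))$ from AKS and bit-parallel comparators.
\item Sort back by original position to attach the flags to the blocks in place. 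This has the same cost.
\end{itemize}

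\textbf{Compaction.} Move the at most $m$ requested blocks, with their $b$-bit payloads, to the front. I would use a linear-size oblivious tight-compaction network, of the kind underlying linear-time oblivious compaction in the ORAM literature, with $O(N)$ conditional-swap elements. Each swap acts on a $b$-bit payload, so the total size is $O(Nb)=O(n)$. This is the step I expect to be the main obstacle, for two reasons:
\begin{itemize}
\item Such compaction circuits must be available with $O(N)$ size \emph{and} $O(\log N)$ depth simultaneously.
\item Their correctness must hold for every marking pattern. An expander-based construction that works only with high probability would be useless here.
\end{itemize}
If depth-optimal linear compaction is not available, my fallback is to recurse. Compact first at a coarser block size, say $b'=\log^3 n$, with a cheaper $O(N'\log N')$ network, where the $\log$ factor is absorbed by the larger $b'$. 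Then refine inside the selected region. The aim is to keep the total size geometric and the depth at $O(\log n)$.

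\textbf{Delivery.} There are now at most $m$ blocks of $b$ bits each. Merge these block records, keyed by block number, with the $m$ query records keyed by $\beta_j$, and sort with AKS. The records have size $O(b+\log n)$, so this costs $O(m\log m\cdot b)=O(m\log^3 n)$. This is precisely where the $\log^3 n$ term in the statement comes from.

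A segmented prefix/broadcast pass of depth $O(\log m)$ then copies each block payload onto every query record that follows it in the same block. Each query extracts bit $o_j$ of its payload with a local $b$-to-$1$ multiplexer, of total size $O(mb)$. A final AKS sort by original query index $j$ returns the output in order, at cost $O(m\log^2 m)$.

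Summing the three stages gives size $O(n+m\log^3 n)$. Each stage is a constant number of depth-$O(\log(n+m))$ networks, with comparators on $O(\log n)$-bit keys implemented in depth $O(\log\log n)$, which is absorbed by pipelining as in AKS. The depth is therefore $O(\log(n+m))$.
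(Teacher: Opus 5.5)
The paper does not prove this statement; it imports it verbatim from Holmgren and Rothblum, so your sketch has to stand on its own. Most of it is sound. With $b=\Theta(\log^2 n)$, the marking sorts cost $O(N\log N\cdot\log n)=O(n)$ plus $O(m\log^2 n)$. The delivery sort on $b$-bit records costs $O(m\log m\cdot b)=O(m\log^3 n)$. Bit-serial pipelining of the comparators keeps the depth at $O(\log(n+m))$. One small point: you silently use $\log(N+m)=O(\log n)$, so if $m$ is superpolynomial in $n$ you should first split the queries into batches of $n$ and run the batches in parallel.

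The genuine gap is the compaction step, and it carries essentially all the difficulty of the theorem. It is the only stage that moves the $n$ data bits themselves, so it must spend $O(1)$ gates per bit. You do not supply it. Instead you assume a worst-case tight-compaction circuit with $O(N)$ size and $O(\log N)$ depth, whose existence you yourself flag as uncertain.

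Your fallback does not close the gap. An AKS, Bene\v{s}, or butterfly network on $N'$ items has $\Theta(N'\log N')$ comparators or switches, and each must carry the full $b'$-bit payload. Moving the data therefore costs $\Theta(N'\log N'\cdot b')=\Theta(n\log N')$, however large $b'$ is. Coarser blocks absorb the logarithm only for the $O(\log n)$-bit routing keys, never for the payload.

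The missing idea is to separate routing from data movement. You need two ingredients:
\begin{itemize}
\item a fixed bounded-degree network with $O(N)$ switches and depth $O(\log N)$ that realizes every compaction pattern by vertex-disjoint paths, i.e.\ a linear-size superconcentrator;
\item a circuit of depth $O(\log N)$ that computes its switch settings from the $N$ mark bits.
\end{itemize}
The second circuit need not be linear in $N$: any size $O(N\log^2 N)=O(n)$ is affordable, and that slack is exactly what your blocking buys. Pippenger's self-routing superconcentrators are a standard source of such a component. Explicit expanders make it deterministic, so your ``with high probability'' worry is not the real obstacle. Once you cite such a result with these size and depth bounds, moving the payloads costs $O(Nb)=O(n)$ and the rest of your argument goes through.
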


The implementation of $\mathrm{SELECT}$ first rearranges the data from the input string $s$ into a certain tree-like data structure that we now describe. 

Let $R$ be the smallest integer such that $M+1\leq 2^R$. Consider a complete binary tree $T$ which has $2^R$ leaves. The height of a node in the tree is defined to be the number of edges in the shortest path to a leaf node (i.e., leaves have height zero and the root has height $R$).

Given the input string in \cref{eq:s}, we shall think of the $t$ ``gaps'' $\{\ell_j\}_j\in [t]$ as being assigned to $t$ distinct nodes in the tree. To do this, first assign each bit of the $M+1$ bit string $1s$ left-to-right to the leftmost $M+1$ leaves of the tree. Then for each $j\in [t]$, the gap $\ell_j-1$ is the number of zero-leaves between the $j$th one-leaf and $j+1$st one leaf (again reading from left to right). We assign the gap $\ell_j$ to the least common ancestor of the $j$th one-leaf and the $j+1$st one leaf.  This is depicted in~\Cref{fig:tree}. 

\begin{figure}[htbp!]
\centering
\begin{tikzpicture}[scale=1.1,
  x=0.68cm,y=0.82cm,
  edge/.style={black!55,line width=0.45pt},
  treenode/.style={circle,draw=black,fill=white,minimum size=6.2mm,inner sep=0pt,font=\scriptsize},
  oneleaf/.style={circle,draw=blue!70!black,fill=blue!10,very thick,minimum size=6.2mm,inner sep=0pt,font=\scriptsize},
  selectedowner/.style={circle,draw=green!40!black,fill=green!18,very thick,minimum size=6.2mm,inner sep=0pt},
  owner/.style={circle,draw=black!60,fill=black!8,very thick,dashed,minimum size=6.2mm,inner sep=0pt},
  lab/.style={font=\scriptsize,inner sep=1pt},
  gapbrace/.style={decorate,decoration={brace,mirror,amplitude=3.5pt},thick,black!65}
]

\foreach \i/\bit in {0/1,1/0,2/1,3/0,4/0,5/1,6/0,7/0,
                     8/0,9/1,10/0,11/0,12/0,13/0,14/0,15/1} {
  \node[treenode] (leaf\i) at (\i,0) {\bit};
}

\foreach \i/\bit in {0/1,2/1,5/1,9/1,15/1} {
  \node[oneleaf] (leaf\i) at (\i,0) {\bit};
}

\foreach \i/\x in {0/0.5,1/2.5,2/4.5,3/6.5,
                   4/8.5,5/10.5,6/12.5,7/14.5} {
  \node[treenode] (a\i) at (\x,1.15) {};
}
\foreach \i/\x in {0/1.5,1/5.5,2/9.5,3/13.5} {
  \node[treenode] (b\i) at (\x,2.30) {};
}
\foreach \i/\x in {0/3.5,1/11.5} {
  \node[treenode] (c\i) at (\x,3.45) {};
}
\node[treenode] (d0) at (7.5,4.60) {};

\node[selectedowner,label={[lab]right:owns $\ell_1=2$}] (b0) at (1.5,2.30) {};
\node[selectedowner,label={[lab]left:owns $\ell_2=3$}] (c0) at (3.5,3.45) {};
\node[selectedowner,label={[lab]above:owns $\ell_3=4$}] (d0) at (7.5,4.60) {};
\node[owner,label={[lab]right:owns $\ell_4=6$}] (c1) at (11.5,3.45) {};

\begin{scope}[on background layer]
\draw[edge] (a0.south west) -- (leaf0.north);
\draw[edge] (a0.south east) -- (leaf1.north);
\draw[edge] (a1.south west) -- (leaf2.north);
\draw[edge] (a1.south east) -- (leaf3.north);
\draw[edge] (a2.south west) -- (leaf4.north);
\draw[edge] (a2.south east) -- (leaf5.north);
\draw[edge] (a3.south west) -- (leaf6.north);
\draw[edge] (a3.south east) -- (leaf7.north);
\draw[edge] (a4.south west) -- (leaf8.north);
\draw[edge] (a4.south east) -- (leaf9.north);
\draw[edge] (a5.south west) -- (leaf10.north);
\draw[edge] (a5.south east) -- (leaf11.north);
\draw[edge] (a6.south west) -- (leaf12.north);
\draw[edge] (a6.south east) -- (leaf13.north);
\draw[edge] (a7.south west) -- (leaf14.north);
\draw[edge] (a7.south east) -- (leaf15.north);

\draw[edge] (b0.south west) -- (a0.north);
\draw[edge] (b0.south east) -- (a1.north);
\draw[edge] (b1.south west) -- (a2.north);
\draw[edge] (b1.south east) -- (a3.north);
\draw[edge] (b2.south west) -- (a4.north);
\draw[edge] (b2.south east) -- (a5.north);
\draw[edge] (b3.south west) -- (a6.north);
\draw[edge] (b3.south east) -- (a7.north);

\draw[edge] (c0.south west) -- (b0.north);
\draw[edge] (c0.south east) -- (b1.north);
\draw[edge] (c1.south west) -- (b2.north);
\draw[edge] (c1.south east) -- (b3.north);

\draw[edge] (d0.south west) -- (c0.north);
\draw[edge] (d0.south east) -- (c1.north);
\end{scope}

\draw[gapbrace] ($(leaf0.south)+(0,-0.35)$) -- ($(leaf2.south)+(0,-0.35)$)
node[midway,below=3.5pt,font=\scriptsize] {$\ell_1$};
\draw[gapbrace] ($(leaf2.south)+(0,-0.35)$) -- ($(leaf5.south)+(0,-0.35)$)
  node[midway,below=3.5pt,font=\scriptsize] {$\ell_2$};
\draw[gapbrace] ($(leaf5.south)+(0,-0.35)$) -- ($(leaf9.south)+(0,-0.35)$)
  node[midway,below=3.5pt,font=\scriptsize] {$\ell_3$};
\draw[gapbrace] ($(leaf9.south)+(0,-0.35)$) -- ($(leaf15.south)+(0,-0.35)$)
  node[midway,below=3.5pt,font=\scriptsize] {$\ell_4$};

\end{tikzpicture}
\caption{Example of the tree data structure with $M=15$, $s=010010001000001=(2,3,4,6)$, and $k=3$. The green nodes are the owners of the gaps $\ell_j$ with $j\leq k$.\label{fig:tree}}
\end{figure}

Two important properties of this construction are as follows:
\begin{enumerate}
\item{ Each of the gaps $\{\ell_j\}_{j\in [t]}$ is assigned to a unique node.}
\item{If a gap $\ell_j$ is assigned to a node at height $h$, then $\ell_j< 2^h$.}
\end{enumerate}
Note that, to implement the desired operation \cref{eq:implement}, we only need to consider the gaps $\ell_j$ that have $j\leq k$ and $\ell_j\leq n$.

Operationally, to construct this data structure we shall compute a bit string $I_v$ for each node $v\in T$.  Letting $h=h(v)$ be the height of the node, we have $I_v\in \{0,1\}^{h}$ defined by
\begin{equation}
I_v=\begin{cases}
\ell_j & \text{ if } v \text{ owns } \ell_j \text{ with } j\leq k \text{ and } \ell_j\leq n.\\
0^h & \text{ otherwise.}
\end{cases}
\label{eq:iv}
\end{equation}
Note that in the first case $\ell_j$ is represented in binary as an $h$-bit string. This is always possible due to property 2. listed above. We shall write $T(k,s,n)$ for the bitstring that consists of the concatenation of $\{I_v\}_{v\in T}$ in some chosen order. Note that this bit string has length linear in $n$. Indeed, since there are $2^R/2^h$ nodes at height $h$, we have
\begin{equation}
\mathrm{length}(T(k,s,n))=\sum_{h=0}^{R} \frac{2^Rh}{2^h}\leq \sum_{h=0}^{\infty} \frac{2^Rh}{2^h} =2^{R+1}=O(n).
\end{equation}
We now show it can be computed reversibly in linear time.
\begin{lemma}
For each positive integer $n$, the map
\begin{equation}
\mathcal{T}_n: |k\rangle|s\rangle|0\rangle \mapsto |k\rangle |s\rangle |T(k,s,n)\rangle
\end{equation}
can be implemented by a circuit with $O(n)$ elementary gates from $\mathcal{G}$ and $O(n)$ ancillas. \label{lem:tree}
\end{lemma}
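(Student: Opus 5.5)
We will build $\mathcal{T}_n$ as a reversible implementation of a classical Boolean circuit of size $O(n)$. The circuit maps $(k,s)$ to $T(k,s,n)$, and we compute it by Bennett's compute--copy--uncompute trick. This costs only a constant factor in gate count and ancillas, since Toffoli, CNOT and $X$ are exactly implementable over $\calG$. So it suffices to give a bounded fan-in classical circuit of size $O(n)$ for $(k,s)\mapsto T(k,s,n)$. We compute the tree bottom-up, one height at a time, so the work at height $h$ is $O(2^{R-h}\cdot h)$ gates. Summing gives $O(2^R)=O(n)$, exactly as in the length bound for $T(k,s,n)$.

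\textbf{Bottom-up aggregation.} For each node $v$ we compute four pieces of data from its two children.
\begin{enumerate}
\item $z_v\in\{0,1\}$: whether the subtree of $v$ contains a one-leaf.
\item $a_v$: the number of leading zero-leaves in the subtree, capped or stored in $h$ bits, together with the flag $z_v$.
\item $b_v$: the number of trailing zero-leaves, stored the same way.
\item $c_v$: the number of one-leaves in the subtree, in $O(h)$ bits.
\end{enumerate}
At a leaf these quantities are trivial. At an internal node $v$ with left child $L$ and right child $R'$ we merge them as follows. The node $v$ owns a gap iff $z_L=z_{R'}=1$, because the least common ancestor of consecutive one-leaves $u,w$ is the unique node where $u$ is the last one in the left subtree and $w$ the first one in the right subtree. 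In that case the gap value is $\ell=b_L+a_{R'}+1$, which fits in $h$ bits by property 2. Its index is $j=c_{\mathrm{pre}}(v)+c_L$, where $c_{\mathrm{pre}}(v)$ is the number of ones strictly to the left of $v$'s subtree, counting the prepended leading 1. Each merge uses additions and comparisons of $O(h)$-bit numbers, so it costs $O(h)$ gates with ripple-carry adders.

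\textbf{Computing the indices and $I_v$.} The prefix counts $c_{\mathrm{pre}}(v)$ come from a second, top-down pass. The root gets $0$. A left child inherits its parent's value, and a right child gets the parent's value plus $c_L$. Here $c_{\mathrm{pre}}$ needs $O(\log M)$ bits, not $O(h)$, so a naive top-down pass would cost $O(2^{R-h}\log M)$ per level, which is $O(n\log n)$ at the low levels. This is the main obstacle. To avoid it, we never compare $j\le k$ at every node directly. Instead we first locate the $k$-th one-leaf $p$, i.e.\ the boundary position: gap $\ell_j$ has $j\le k$ iff its right endpoint lies at or before the $(k{+}1)$-th one-leaf of the string $1s$. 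We find the $(k{+}1)$-th one by descending the tree along a single root-to-leaf path using the counts $c_v$, which costs $O(R\log M)=O(\log^2 n)$. We then mark that path. A node $v$ satisfies the index condition iff its subtree lies entirely to the left of the boundary, or it sits on the boundary path and its left child owns the relevant ones. This is a one-bit flag propagated top-down, costing $O(1)$ per node.

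Finally we set $I_v$ to $\ell$, masked by three conditions: $v$ owns a gap, the index flag holds, and $\ell\le n$. Testing $\ell\le n$ costs $O(h)$ gates, comparing against the appropriate truncation of the constant $n$; since $\ell<2^h$, this is trivially true when $2^h\le n$. The masking costs $O(h)$ per node. Every node therefore uses $O(h(v))$ gates, plus the $O(\log^2 n)$ path search, for $O(n)$ in total. We then copy $T(k,s,n)$ out and uncompute.

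The delicate points are the $O(h)$-bit representation of $a_v$ and $b_v$, and the handling of the trailing $0^r$ region. Ones past $\sum\ell_j$ cannot occur, so with $z$ flags the trailing region simply yields no ownership and needs no special treatment.
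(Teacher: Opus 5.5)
Your bottom-up pass matches the paper's: its $f_v,r_v$ are your $a_v+1$ and $2^h-b_v$, and its gap length $(2^{h-1}+f_w)-r_u$ equals your $b_L+a_{R'}+1$. You also identify the right obstacle, the $O(\log M)$-bit prefix counts. You resolve it by a genuinely different route, and the resolution works, but one step is mis-costed.

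\textbf{How the paper handles the index test.} It propagates a capped ``1-leaf budget'' $q_v=\min\{(k-p_v)_+,2^h\}$ top-down to every node, via $q_u=\min\{q_v,2^{h-1}\}$ and $q_w=\min\{(q_v-c_u)_+,2^{h-1}\}$. The cap means a height-$h$ node carries only $O(h)$ bits. So the top-down pass costs the same $\sum_h 2^{R-h}O(h)=O(N)$ as the bottom-up pass, and the test $j\le k$ becomes the local comparison $c_u\le q_v$.

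\textbf{How your route differs.} You locate the $(k+1)$-th one-leaf of $1s$ and reduce the test to one-bit flags: either the subtree lies entirely left of the boundary, or it is on the boundary path and the path turns right there. That case analysis is correct. The paper's recurrence is in effect your descent carried out at every node at once, with capping. This needs no data-dependent addressing and automatically covers the case where $1s$ has at most $k$ ones. Your version concentrates the $\log M$-bit arithmetic on a single path, at the price of addressing and one extra case. You must handle separately the situation where no $(k+1)$-th one-leaf exists: then every owned gap qualifies, which a single comparison at the root detects.

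\textbf{The mis-costed step.} A circuit cannot descend a root-to-leaf path in $O(R\log M)$ gates. The path depends on the (superposed) input, so the circuit cannot branch on it. At depth $d$ it must fetch the count of the current node's left child from among $2^d$ candidates, each of $O(R-d)$ bits. This needs a multiplexer, or one-hot on-path flags ANDed into an accumulator, of size $O(2^d(R-d))$, because the output depends on all those candidate bits.

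\textbf{Why the conclusion survives.} The fix is harmless: $\sum_{d=0}^{R-1}2^d(R-d)=O(2^R)=O(n)$, plus $O(\log M)$ per level to update the remaining count. So the total stays linear and the lemma follows. The reason, however, is the same geometric sum that drives the rest of the construction, not a polylogarithmic search, and your argument should state it that way.
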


\begin{proof}
Write $N=2^R$ for the number of leaves in the tree and let
\begin{equation}
a=1s0^{N-M-1}
\end{equation}
be the $N$-bit string obtained from $s$ by adding a $1$ at the beginning and padding with zeros at the end.

In the following we use the fact that comparison, addition, and subtraction of integers represented by $w$-bit strings can be performed exactly by reversible classical  circuits using $O(w)$ Toffoli gates, NOT gates, and ancillas. The same holds if we compile over $\mathcal{G}$ since Toffoli and NOT can be implemented with $O(1)$ gates from $\mathcal{G}$.

For each node $v$ in the tree, let $L(v)$ be the set of all leaves that are below $v$. Note that if $v$ is at height $h$ then $|L(v)|=2^h$.

In order to compute $T(k,s,n)$ we will need to first compute some auxiliary data. This is done in two passes through the nodes of the tree, a bottom-up pass followed by a top-down pass.

 In the bottom-up pass we compute, for each node $v$, the following data:
\begin{align}
c_v&=|\{i\in L(v): a_i=1\}|\\
f_v&= \text{ Index of the leftmost $1$-leaf in $L(v)$}\\
r_v&=\text{ Index of the rightmost $1$-leaf in $L(v)$}.
\end{align}
Here the index of a leaf in $L(v)$ is an integer between $1$ (the leftmost leaf in $L(v)$) and $2^h$ (rightmost leaf in $L(v)$), where $h$ is the height of $v$. We also adopt the convention that $f_v=r_v=0$ if $c_v=0$. Moreover, $0\leq c_v\leq 2^h$. So we can represent $c_v,f_v,r_v$ as $(h+1)$-bit strings. If $v$ has left and right children $u,w$ respectively then
\begin{align}
c_v=c_u+c_w \qquad f_v=\begin{cases}
f_u & c_u>0\\
2^{h-1}+f_w & c_u=0 \text{ and } c_w>0\\
0 & c_u=c_w=0.
\end{cases}
\qquad
r_v=\begin{cases}
2^{h-1}+r_w & c_w>0\\
r_u & c_w=0.
\end{cases}
\label{eq:bottomup}
\end{align}

From \cref{eq:bottomup} we see that  computing $c_v,f_v, r_v$  given this data for its children only requires a constant number of additions and comparisons of $h+1$-bit strings. Since there are $N/2^h$ nodes of height $h$, the total number of elementary gates needed to compute this data for all nodes is
\begin{equation}
\sum_{h=0}^{R} \frac{N}{2^h}O(h+1)=O(N)=O(n).
\label{eq:sumh}
\end{equation}
This is also an upper bound on the total number of ancillas used.

In the top-down pass we compute a ``1-leaf budget'' for each node $v$ that is not a leaf.  Let $p_v$ be the number of $1$-leaves that are strictly to the left of $L(v)$; these $1$-leaves have already been spent out of our total budget of $k$. The remaining $1$-leaf budget for node $v$, is then
\begin{equation}
q_v\equiv \min\{(k-p_v)_{+},2^h\}
\label{eq:qvdef}
\end{equation}
where $i_{+}=\max\{i,0\}$ for an integer $i$. We think of $q_v$ as the number of $1$-leaves that we care about (up to $k$ total) that could in principle appear in $L(v)$. The minimum is taken with $2^h$ since this is the total size of $L(v)$.

The root node has $q_v=\min\{k,N\}$; if $v$ has left child $u$ and right child $w$ then
\begin{equation}
q_u=\min\{q_v,2^{h-1}\} \qquad q_w=\min\{(q_v-c_u)_{+},2^{h-1}\}.
\end{equation}
This allows us to compute $q_u,q_w$ from $q_v,c_u$ using $O(h)$ elementary gates. The total cost of the top-down pass is then upper bounded as in \cref{eq:sumh}.

Now suppose $v$ is a node with left child $u$ and right child $w$. Then:
\begin{itemize}
\item{node $v$  owns a gap iff $c_u>0$ and $c_w>0$.}
\item{If $v$ owns a gap, its length is $d_v=(2^{h-1}+f_w)-r_u$.}
\item{If $v$ owns a gap, it is one of the first $k$ gaps iff $c_u\leq q_v$ (to see this note that the gap owned by $v$ is the $j$th gap, where $j=p_v+c_u$, and then use \cref{eq:qvdef}).}
\end{itemize}
Using these properties and the data $c_v,f_v,r_v,q_v$ computed from the two passes, we can now compute the data $I_v$ from \cref{eq:iv} for each non-leaf node $v$ (note that leaf nodes never own gaps so $I_v=0$ for them). In particular, 
\begin{equation}
I_v=\begin{cases}
d_v & c_u>0 \text{ and } c_u \leq q_v \text{ and }c_w>0\text{ and }d_v\leq n.\\
0^h & \text{otherwise}.
\end{cases}
\end{equation}
Since this uses $O(h)$ elementary gates per node, the total cost is $O(N)=O(n)$ by \cref{eq:sumh}. After computing $I_v$ for all nodes we then uncompute all of the auxiliary data and return all ancillas to the all-zero state.
\end{proof}

Once we have computed $T(k,s,n)$, the algorithm proceeds by using HR multiselection (\Cref{thm:hr}) to implement all unitaries $P_\ell$ corresponding to gaps $\ell$ that are owned by vertices at a given height $h$, for each  $h=0,1,\ldots,R$ one by one. This is described in \Cref{alg:select}. The runtime of this algorithm is $O(n)$ which can be seen as follows:
\begin{itemize}
\item Computing the tree data structure in line 2 takes $O(n)$ time due to \Cref{lem:tree}.
\item The runtime of HR multiselection in line 4 is $O(2^h+2^R/2^h h^3)$ since there are $2^R/2^h$ nodes at height $h$ and $I_v< 2^h$. Summing this cost over all heights gives
\begin{equation}
\sum_{h=0}^{R}O(2^h+2^R/2^h h^3)=O(2^R)=O(n).
\end{equation}
\end{itemize}

\Cref{alg:select} performs the operation
\begin{equation}
|k\rangle|s\rangle|x\rangle \rightarrow |k\rangle|s\rangle \bigotimes_{v: I_v\neq 0} Z_{I_v}|x\rangle=|k\rangle|s\rangle \bigotimes_{j\in [k]: \ell_j\leq n} Z_{\ell_j}|x\rangle,
\end{equation}
where the last equality is due to \cref{eq:iv}. Thus we have implemented the desired operation \cref{eq:implement}.

\begin{algorithm2e}[htbp!]
\DontPrintSemicolon
\caption{\texttt{SELECT}}
\label{alg:select}
\textbf{Input: }$\ket{k,s,x}$ \;
Compute the tree data structure $T(k,s,n) = \{ I_v: \ v \text{ is a tree node}\}$ into an ancilla register\;

\For{$h = 1 \to R$}
{
Consider all nodes $v$ of height $h$.
Use Holmgren--Rothblum multiselection (\Cref{thm:hr}) to read all bits $x_{I_v}$ for all nodes $v$ at this height simultaneously:
\begin{equation*}
\ket{x} \otimes \left( \bigotimes_{v: \, \mathrm{height} \, h} \ket{I_v} \right) \otimes \ket{0} \mapsto \ket{x} \otimes \left( \bigotimes_{v: \, \mathrm{height} \, h} \ket{I_v} \right) \otimes \ket{\bigoplus_{v: \, \mathrm{height} \, h}x_{I_v}}
\end{equation*}
Note that at this height $I_v\in \{0,1\}^h$, and we use the convention $x_0=0$. \;

Apply a single-qubit Pauli $Z$ to the last parity ancilla.  The state becomes
\begin{equation*}
    \ket{x} \otimes \left( \bigotimes_{v: \, \mathrm{height} \, h} \ket{I_v} \right) \otimes (-1)^{\sum_{v: \, \mathrm{height} \, h}x_{I_v}}  \ket{\bigoplus_{v: \, \mathrm{height} \, h}x_{I_v}}
\end{equation*} \;

Uncompute the HR multiselection workspace for this height.\;
}

Uncompute the tree data structure workspace.
\end{algorithm2e}
\end{proof}

\subsection{PREPARE lemma \label{sec:prepare}}

\begin{proof}[Proof of \Cref{lem:poisson-prepare}]

Recall that our goal is to prepare a state that is $2^{-4n}$-close to
\begin{equation}
|\tilde{\beta}\rangle=\frac{1}{\sqrt{\Lambda_K}}\sum_{k=0}^{K} \sqrt{\frac{\pi^k}{k!}}|k\rangle|g_k\rangle,
\end{equation}
where $|g_k\rangle$ are some normalized states, $\Lambda_K=\sum_{k=0}^{K}\pi^k/k!$ is a normalizing constant, and
\begin{equation}
K=\left\lceil \frac{cn}{\log_2(n+2)}\right\rceil,
\end{equation}
where $c$ is a constant we may choose to be sufficiently large. To achieve this, it suffices to describe a (heralded) state preparation algorithm which succeeds in preparing a state that is $2^{-an}$-close to $|\tilde{\beta}\rangle$ with a probability that is lower bounded by a constant independent of $n$ and which we can compute explicitly to within $2^{-an}$ error, for some constant $a>4$. Then we use a constant number of rounds of exact amplitude amplification just as in the proof of \Cref{thm:optimalphase} (we do not repeat this argument here). In the remainder of the proof, we describe the heralded state preparation algorithm and establish that it has the above properties.

The factor of $\sqrt{\pi}^k$ makes this slightly nontrivial. The first step is to prepare a similar state where $\pi$ is replaced with $4$. In particular, we first describe a heralded algorithm which, with known constant success probability,  prepares a state
\begin{equation}
|\psi_0\rangle=\frac{1}{\sqrt{S_K}}\sum_{k=0}^{K} \sqrt{\frac{4^k}{k!}}|k\rangle|g_k\rangle \qquad \qquad S_K\equiv\sum_{k=0}^{K} 4^k/k!.
\label{eq:psi0}
\end{equation}

To do this, it suffices to construct a heralded classical algorithm that takes as input a string $r\in \{0,1\}^m$ of uniformly random bits, where $m=O(K\log K)$ and, conditioned on success, outputs an integer $k\in \{0,1,\ldots, K\}$ with probability $p(k)\propto 4^k/k!$.   Then we can implement the sampler on a quantum computer and run it coherently on a uniform superposition $|+^m\rangle$. Postselecting on a success flag then gives the state $|\psi_0\rangle$. Here $g_k$ consists of  ``garbage''
ancilla bits used in the sampling algorithm. 

We start with an algorithm $\calA_K$ that outputs $k$ with probability $\propto 1/k!$: 
\begin{lemma}
    For any $K\geq 1$, there is a classical sampling algorithm $\calA_K$ that outputs $k\in \{0,1,\ldots, K\}$ with probability $\eta_K/(3k!)$ or declares failure, where $\eta_K\geq 3/4$ is explicit and depends only on $K$.
    The algorithm uses $O(K \log K)$ uniformly random bits and elementary gates.
\end{lemma}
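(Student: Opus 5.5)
Sample a uniformly random permutation of $\{1,\ldots,K\}$, written as a sequence of insertion choices, and output the length of its longest initial increasing run, i.e.\ the largest $k$ with $\pi(1)<\pi(2)<\cdots<\pi(k)$. The starting point is the classical fact that, for a uniformly random permutation of $K$ elements, the first $k$ entries are increasing with probability exactly $1/k!$ for every $k\le K$. Hence $\Pr[\text{run length}\ge k]=1/k!$, and
\begin{equation*}
\Pr[\text{run length}=k]=\frac{1}{k!}-\frac{1}{(k+1)!}=\frac{k}{(k+1)!}.
\end{equation*}
This is not yet $\propto 1/k!$, so a correction by rejection is needed.

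The plan is therefore to avoid permutations of all $K$ elements and instead test the event ``$U_1<U_2<\dots<U_k$'' for i.i.d.\ uniform variables. Concretely, I would run von Neumann's exponential-type construction. Draw i.i.d.\ uniform values $U_1,U_2,\ldots$ and stop at the first descent. Let $N$ be the number of increasing elements before that descent, so $\Pr[N\ge k]=1/k!$.

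To get $\Pr[\text{output }k]\propto 1/k!$ exactly, I would use a two-stage trick. First pick a candidate $k$ from the distribution $q(k)$ that is uniform-ish on $\{0,\ldots,K\}$. Then accept if $U_1<\cdots<U_k$, which happens with probability exactly $1/k!$. The output probability is then $q(k)/k!$, and the proportionality requires $q$ to be uniform.

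Uniform $q$ over $K+1$ values loses a factor $K$ in the success probability, which would make $\eta_K\ge 3/4$ impossible. A better choice is to take $q(k)\propto 2^{-k}$-like weights and compensate inside the acceptance step. For $k\ge 1$, accept with probability $2^{k}/k!$ scaled. This is bounded, since $2^k/k!\le 2$, so I would use acceptance probability $2^{k-1}/k!$.

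The real obstacle is making the ``permutation'' test exact with finitely many bits. A random permutation of $k$ elements cannot be sampled exactly with a power-of-two number of coin flips. I would instead draw each $U_i$ as a $b=\Theta(\log K)$-bit integer and declare failure on any tie; ties occur with probability $O(K^2 2^{-b})$. Without ties, the relative order is an exactly uniform permutation by symmetry, so conditioned on no ties, the event $\{U_1<\cdots<U_k\}$ has probability exactly $1/k!$.

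The remaining difficulty is that the tie probability depends on $k$, which spoils exact proportionality. I would fix this by always drawing all $K$ values and declaring failure on \emph{any} tie among all $K$. The no-tie probability $\tau_K$ then depends only on $K$, and conditioned on no ties the full order is uniform. The output probability becomes $\tau_K\cdot q(k)\cdot 1/k!$.

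With $q$ chosen exactly dyadic, for instance $q(k)=2^{-(k+1)}$ with leftover mass assigned to failure, and a further dyadic acceptance factor, the output is $\propto 1/k!$ with an explicit constant depending only on $K$. Matching the specific normalization $\eta_K/(3k!)$ with $\eta_K\ge 3/4$ then reduces to bookkeeping of these constants; I would choose $b$ so that $\tau_K\ge 0.9$.

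The coherent implementation cost is $O(K)$ comparisons of $O(\log K)$-bit numbers, plus tie detection. I expect tie detection to be the main cost obstacle, since $O(K^2)$ pairwise checks is too many. I would instead sort with an $O(K\log K)$-comparison network, but that costs $O(K\log^2K)$ gates. Alternatively, I would check only adjacent ties after computing ranks. To meet $O(K\log K)$ total, I would try making failure depend only on adjacent ties $U_i=U_{i+1}$ for $i<K$, which are symmetric events. I would then verify that conditioning on those alone still leaves the first-$k$ increasing probability exactly proportional to $1/k!$; this is the step I am least sure of.
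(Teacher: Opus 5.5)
\paragraph{Gap 1: the compensation step.} Nothing you actually build has acceptance probability proportional to $2^{k}/k!$. The only exact test you describe is the event $U_1<\cdots<U_k$, which has probability $1/k!$ given distinctness. So if you draw the candidate from $q(k)=2^{-(k+1)}$ and then apply any further acceptance factor $f(k)\le 1$, your output probability is $\tau_K\,2^{-(k+1)}f(k)/k!$. Proportionality to $1/k!$ then forces $f(k)\propto 2^{k}$, hence $f(k)\le 2^{k-K}$, and the total success probability becomes exponentially small. With uniform $q$ you get $\eta_K=O(1/K)$, as you note yourself. (Even granting an acceptance probability of exactly $2^{k-1}/k!$, your constants give output $\tau_K/(4k!)$, i.e.\ $\eta_K=3\tau_K/4<3/4$, so the ``bookkeeping'' would not meet the stated bound.)

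The missing idea is to not choose $k$ in advance. Instead, use a sequential stopping rule: stage $j$ continues with probability $1/j$, stops and outputs $j$ with probability $1/j$, and fails otherwise. The output probability then telescopes to $\frac13\cdot\frac{1}{(j-1)!}\cdot\frac1j=\frac{1}{3j!}$. This is exactly the paper's algorithm. There, $B$ uniform on $\{0,1,2\}$ handles $k\in\{0,1\}$, and $R_j$ uniform on $\{0,\dots,j-1\}$ means continue if $R_j=0$ and output $j$ if $R_j=1$. Your random-order picture can express the same rule: continue if $U_j$ is a new maximum of $U_1,\dots,U_j$, output $j$ if it is a new minimum. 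But the increasing-run test by itself is the wrong event.

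\paragraph{Gap 2: the $O(K\log K)$ gate bound.} To keep the failure event independent of $k$, you must detect ties among all $K$ values. You give no $O(K\log K)$-gate way to do this, and you observe yourself that a sorting network costs $O(K\log^2 K)$ gates. The adjacent-tie shortcut is not exact. With $N=2^b$, one computes
$\Pr[U_1<\cdots<U_k\mid U_i\neq U_{i+1}\ \forall i<K]=\frac{1}{k!}\prod_{i=1}^{k-1}\frac{1-i/N}{1-1/N}$,
which already differs from $1/6$ at $k=3$. This matters downstream: in the application $K=\Theta(n/\log n)$, so an extra $\log K$ factor breaks the $O(n)$ budget. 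An approximate sampler would instead need $b=\Omega(n)$ bits per value, which is also too expensive.

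The paper avoids ties altogether by sampling the relative ranks $R_j$ directly:
\begin{itemize}
\item Each uniform draw on $\{0,\dots,d-1\}$ takes $\lceil\log_2 d\rceil$ fair bits, with a redraw if the result is $\ge d$.
\item All $K$ integers are always generated, and the total number of draws $T$ is capped at $8K$.
\item The number of trials in rejection sampling is independent of the accepted values. So the cap contributes a factor $\eta_K=\Pr[T\le 8K]$ that does not depend on $k$, and Markov's inequality with $\mathbb{E}[T]<2K$ gives $\eta_K\ge 3/4$.
\item Each draw costs only $O(\log K)$ gates.
\end{itemize}
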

\begin{proof}
We first describe a simple algorithm that outputs $k$ with probability $1/(3k!)$. 
Draw $B$ uniformly from $\{0,1,2\}$. If $B=0$ or $1$, output $k=B$. If $B=2$, then for each $j\in \{2, \ldots, K\}$, draw $R_j$ uniformly from $\{0, \ldots, j-1\}$: continue if $R_j=0$; output $k=j$ if $R_j=1$; declare failure otherwise. 
If the loop finishes without an output, also declare failure. 

What remains is to generate $B, R_2, \ldots, R_K$ from uniformly random bits.  
For each task of sampling uniformly from $\{0,\ldots, d-1\}$ for some $d$, we can draw $\lceil\log_2 d\rceil$ fair bits and redraw if the resulting integer $\geq d$, so each draw succeeds with probability $d/2^{\lceil\log_2 d\rceil} > 1/2$. 
Let $T$ be the total number of draws needed to generate all $K$ integers, and thus $\E[T]< 2K$. 
We terminate and declare failure after $8K$ draws, which uses at most $O(K \log K)$ fair bits in total. 
By Markov's inequality, $\eta_K = \Pr[T \leq 8K] \geq 3/4$. 
Hence the probability of outputting $k$ is $\eta_K/(3k!)$.

To make $\eta_K$ explicit, note that $T = N_1+\cdots + N_K$ where $N_1, \ldots, N_K$ are independent geometric random variables for the numbers of draws to generate $B, R_2, \ldots, R_K$. 
The success probabilities are
\begin{equation}
  p_1 = 3/4 \qquad \text{ and } \qquad p_j = j/2^{\lceil \log_2 j\rceil} \text{ for each }2\leq j\leq K. 
\end{equation}
Since $\Pr[N_j = r] = p_j (1-p_j)^{r-1}$ for each $r\geq 1$, we have
\begin{equation}
  \eta_K=\Pr[T \leq 8K] = \paren*{\prod_{j=1}^K p_j} \sum_{\substack{r_1, \ldots, r_K\geq 1\\ r_1+\cdots +r_K\leq 8K}} \prod_{j=1}^K (1-p_j)^{r_j-1}. 
\end{equation}

The acceptance or rejection of each draw uses $O(\log K)$ Toffoli gates. Therefore, the total number of elementary gates and ancillas used in $\calA_K$ is $O(K\log K)$.
\end{proof}

Then we can sample $k$ with probability $\propto 4^k/k!$ as follows: 
Run $\calA_K$ four times independently and add their outputs. 
Declare failure if one of the four runs fails or the sum exceeds $K$. Then for $0\leq k\leq K$, 
\begin{equation}
  \Pr[\text{success and sum to }k] = \frac{\eta_K^4}{3^4} \sum_{k_1+k_2+k_3+k_4 = k} \frac{1}{k_1! k_2! k_3! k_4!} = \frac{\eta_K^4}{81}\cdot \frac{4^k}{k!}.
\end{equation}
The success probability is therefore $\eta_K^4 S_K/81 = \Omega(1)$. 
The addition of the four outputs can be coherently implemented as 
\begin{equation}
  \ket{k_1, k_2, k_3, k_4}\ket{0^b} \mapsto \underbrace{\ket{k_1, k_2, k_3, k_4}}_{\text{``garbage''}}\ket{k_1+ k_2+k_3+k_4} , 
\end{equation}
where $b = \lceil \log_2(4K+1)\rceil$, by a reversible ripple-carry adder using $O(b) = O(\log K)$ Toffoli and CNOT gates. 
Since each $\calA_K$ uses $O(K\log K)$ elementary gates and ancillas, the total gate and ancilla costs of this sampling step are also $O(K\log K)$. 

Starting from the state $|\psi_0\rangle$ from \cref{eq:psi0}, we now show how to (approximately) multiply the amplitude of $|k\rangle|g_k\rangle$ by the desired factor $(\sqrt{\pi}/2)^k$. This is a little bit tricky and to achieve this we will use  the SELECT lemma (\Cref{lem:select}).

Since $\frac{\sqrt{\pi}}{2}<1$, we can write the binary expansion
\begin{equation}
\frac{1-\sqrt{\pi}/2}{2}=\sum_{\ell=1}^{\infty}2^{-\ell} d_\ell
\end{equation}
where $d_\ell\in \{0,1\}.$ We shall work with the first $L=128n$ bits of this binary expansion. Starting from $|\psi_0\rangle$, we append two $L$-bit registers in the state $|+^L\rangle|d_1d_2\ldots d_L\rangle$, giving a state
\begin{align}
&\frac{1}{\sqrt{S_K}}\sum_{k=0}^{K}\sqrt{\frac{4^k}{k!}}|k\rangle|g_k\rangle|+^L\rangle|d_1d_2\ldots d_L\rangle \nonumber \\
&\qquad \approx_{2^{-L/12}}\frac{1}{\sqrt{S_K\cdot 2^L}}\sum_{k=0}^{K}\sum_{s\in \{0,1\}^L: |s|\geq k} \sqrt{\frac{4^k}{k!}}|k\rangle|g_k\rangle|s\rangle |d_1d_2\ldots d_L\rangle\\
&\qquad =\frac{1}{\sqrt{S_K}}\sum_{k=0}^{K} \sqrt{\frac{4^k}{k!}}|k\rangle|g_k\rangle \sum_{\stackrel{\ell_1,\ell_2,\ldots, \ell_k\in \mathbb{N}_{+}}{\sum {\ell_j}\leq L}} \frac{1}{\sqrt{2^{\sum{\ell_j}}}}|(\ell_1,\ell_2,\ldots, \ell_k)\rangle|+^{L-\sum_j\ell_j}\rangle|d_1d_2\ldots d_L\rangle.
\end{align}
Here the error bound follows from \Cref{lem:plus}.
Next we apply the SELECT operation with parameters $(K,M,n)=(K,L,L)$, which gives (up to the same error)
\begin{equation}
\frac{1}{\sqrt{S_K}}\sum_{k=0}^{K} \sqrt{\frac{4^k}{k!}}|k\rangle|g_k\rangle \sum_{\stackrel{\ell_1,\ell_2,\ldots, \ell_k\in \mathbb{N}_{+}}{\sum {\ell_j}\leq L}} \frac{1}{\sqrt{2^{\sum{\ell_j}}}}|(\ell_1,\ell_2,\ldots, \ell_k)\rangle|+^{L-\sum_j\ell_j}\rangle\left(\prod_{j=1}^{k} (1-2d_{\ell_j})\right)|d_1d_2\ldots d_L\rangle{\color{blue}.}
\end{equation}
Finally, measure the first $L$-bit ancilla register in the $X$-basis and declare success if you get $\ket{+^L}$. 
The unnormalized vector $|\psi_1\rangle$ corresponding to this successful outcome satisfies
\begin{equation}
|\psi_1\rangle\approx_{2^{-L/12}} \frac{1}{\sqrt{S_K}}\sum_{k=0}^{K}q_{k,L}\cdot \sqrt{\frac{4^k}{k!}}|k\rangle|g_k\rangle
\qquad \text{ where } \qquad q_{k,L}=\sum_{\stackrel{\ell_1,\ell_2,\ldots, \ell_k\in \mathbb{N}_{+}}{\sum {\ell_j}\leq L}} \frac{1}{2^{\sum{\ell_j}}}\left(\prod_{j=1}^{k} (1-2d_{\ell_j})\right).
\end{equation}
Note that
\begin{equation}
q_{k,\infty}=\left(\sum_{\ell\in \mathbb{N}_{+}} \frac{1}{2^\ell}(1-2d_{\ell})\right)^k=(\sqrt{\pi}/2)^k
\end{equation}
is the multiplicative factor we want to apply, and
\begin{align}
|q_{k,\infty}-q_{k,L}|
&\leq
\sum_{\substack{\ell_1,\ldots,\ell_k\geq1\\                  \ell_1+\cdots+\ell_k>L}}
2^{-(\ell_1+\cdots+\ell_k)}
=
2^{-L}\sum_{r=0}^{k-1} \binom{L}{r}
\leq
2^{-L/6},
\label{eq:qkl-error}
\end{align}
where we used $k\leq K\leq L/4$ and
$h(1/4)<5/6$ (and when $k=0$ we have $q_{k,L}=q_{k,\infty}=1$).

Now \cref{eq:qkl-error} gives
\begin{equation}
\left\|
|\psi_1\rangle
-
\sqrt{\frac{\Lambda_K}{S_K}}\,|\tilde{\beta}\rangle
\right\|
\leq 2^{-L/12}+
\left(\frac1{S_K}
\sum_{k=0}^{K}
\frac{4^k}{k!}
\left|q_{k,L}-q_{k,\infty}\right|^2\right)^{1/2}
\leq
2^{-L/12}+2^{-L/6}\leq 2^{-10n}.
\label{eq:psi1-error}
\end{equation}
Since $|\tilde{\beta}\rangle$ is normalized and
\begin{equation}
1\leq \Lambda_K\leq S_K\leq e^4, 
\label{eq:ratio}
\end{equation}
using Eq.~\eqref{eq:psi1-error} we see that the normalized output state $\ket{\psi_1}/\norm{\ket{\psi_1}}$ of this procedure is (say) $2^{-6n}$-close to the target state $|\tilde{\beta}\rangle$. Moreover, the success probability of this second step is
\begin{equation}
\frac{\Lambda_K}{S_K}+O(2^{-10n}),
\end{equation}
which is bounded below by a positive constant due to Eq.~\eqref{eq:ratio}.
\end{proof}

\end{document}